\documentclass[sort, compress, 3p, times]{elsarticle}
\usepackage{xcolor}

\usepackage{amssymb}
\usepackage{amsmath}
\usepackage{amsthm}
\usepackage{booktabs}
\usepackage{subcaption}
\usepackage{algorithm}
\usepackage{algpseudocode}
\usepackage{soul}
\usepackage{orcidlink}
\usepackage{natbib}
\makeatletter
\def\ps@pprintTitle{%
 \let\@oddhead\@empty
 \let\@evenhead\@empty
 \def\@oddfoot{\footnotesize\itshape
       Accepted for publication in \ifx\@journal\@empty Elsevier
       \else\@journal\fi\hfill\today}%
 \let\@evenfoot\@oddfoot}
\makeatother

\journal{Information Sciences}

\begin{document}

\begin{frontmatter}



\title{The Theory and Practice of Computing the Bus-Factor}

\author[demacs]{Sebastiano A. Piccolo\corref{cor1}}
\cortext[cor1]{Corresponding author}
\ead{sebastiano.piccolo@unical.it}
\affiliation[demacs]{%
  organization = {DeMaCS, University of Calabria},
  addressline = {Viale P. Bucci},
  city = {Rende, Cosenza},
  postcode = {87036},
  state = {Italy}
}

\author[unime]{Pasquale De Meo}
\ead{pdemeo@unime.it}
\affiliation[unime]{%
  organization = {DICAM, University of Messina},
  postcode = {98122},
  city = {Messina},
  state = {Italy}
}

\author[demacs]{Giorgio Terracina}
\ead{giorgio.terracina@unical.it}

\author[demacs]{Gianluigi Greco}
\ead{gianluigi.greco@unical.it}

\newtheorem{theorem}{Theorem}
\newtheorem{proposition}{Proposition}
\newtheorem{corollary}{Corollary}
\newtheorem{lemma}{Lemma}[theorem]
\newtheorem{definition}{Definition}

\begin{abstract}
The \textit{bus-factor} is a measure of project risk with respect to personnel availability, informally defined as the number of people whose sudden unavailability would cause a project to stall or experience severe delays.
Despite its intuitive appeal, existing bus-factor measures rely on heterogeneous modeling assumptions, ambiguous definitions of failure, and domain-specific artifacts, limiting their generality, comparability, and ability to capture project fragmentation.
In this paper, we develop a unified, domain-agnostic framework for bus-factor estimation by modeling projects as bipartite graphs of people and tasks and casting the computation of the bus-factor as a family of combinatorial optimization problems. 
Within this framework, we formalize and reconcile two complementary interpretations of the bus-factor, redundancy and criticality, corresponding to the Maximum Redundant Set and the Minimum Critical Set, respectively, and prove that both formulations are NP-hard.
Building on this theoretical foundation, we introduce a novel bus-factor measure inspired by network robustness. 
Unlike prior approaches based solely on task coverage, the proposed measure captures both loss of coverage and increasing project fragmentation by tracking the largest connected set of tasks under progressive contributor removal. 
The resulting measure is normalized, threshold-free, and applicable across domains; we show that its exact computation is NP-hard as well.
We further propose efficient linear-time approximation algorithms for all considered measures. 
Finally, we evaluate their behavior through a sensitivity analysis based on controlled perturbations of project structures, guided by expectations derived from project management theory. 
Our results show that the robustness-based measure behaves consistently with these expectations and provides a more informative and stable assessment of project risk than existing alternatives.
\end{abstract}



\begin{keyword}
Bus-Factor \sep Bipartite Graphs \sep Network Robustness \sep Maximum Redundant Set \sep Minimum Critical Set
\end{keyword}

\end{frontmatter}



\section{Introduction}
\label{sec:intro}

The \emph{bus-factor} (also known as \textit{truck-factor}) of a project is informally defined as the number of people who have to disappear -- as if they were hit by a bus -- from a project until the project stalls or experiences a significant delay, either because of a lack of knowledgeable personnel~\cite{zazworka2010developers} or because of high project fragmentation~\cite{piccolo2018design}.

Many projects have been found to exhibit a highly skewed labor division, with a small fraction of people being responsible for most of the total workload~\cite{yamashita2015pareto}.
Therefore, even the unavailability of a small number of professionals can have serious consequences for project completion.
As such, the bus-factor captures a real risk for organizations and projects, and its practical importance has been widely recognized~\cite{jabrayildaze2022bus}. 

Despite its intuitive appeal, computing the bus-factor involves several nontrivial design choices, including how to model projects, how to represent contributor knowledge, and how to define project stalling~\cite{ricca2011difficulty}.
Existing approaches differ in the underlying conceptualization of the bus-factor.
In particular, the bus-factor is understood either in terms of redundancy, quantifying how many people can safely leave a project without harming it~\cite{zazworka2010developers}, or in terms of criticality, quantifying the smallest set whose departure would compromise the project~\cite{avelino2016novel}.

Such a conceptual difference can lead to markedly different bus-factor estimates for the same project~\cite{ferreira2017comparison}.
In addition, existing approaches depend on the use they make of domain-specific artifacts, notably GitHub metadata: some approaches consider only commits~\cite{avelino2016novel}, some consider file ownership~\cite{zazworka2010developers}, some use the whole history~\cite{cosentino2015assessing} and others give higher weight to more recent commits or contributions~\cite{jabrayildaze2022bus}. 
This heterogeneity makes it difficult to evaluate and interpret the different approaches to bus-factor estimations.

As such, in this paper we develop a unified theoretical framework that allows us to reason about the cascading effects that arise when contributors leave a project, decoupling the computation of the bus-factor from domain-specific artifacts.
We adopt a minimal yet expressive representation in which a project is modeled as a bipartite graph connecting people to tasks, and condense prior bus-factor definitions as two combinatorial problems on bipartite graphs: Maximum Redundant Set (MRS) and Minimum Critical Set (MCS) that capture respectively the redundancy and the criticality interpretations of the bus-factor.

This framework reveals that prior approaches, which rely on task coverage and fixed thresholds to define project failure, are unable to capture project fragmentation and the role of integrators~\cite{piccolo2018design} -- contributors whose absence disconnects otherwise independent components of a project. 
Fig.~\ref{fig:summary-of-the-paper} illustrates this weakness.

Consider the toy project represented as a bipartite graph of people and tasks (Fig.~\ref{fig:summary-of-the-paper}\textbf{A}).
The project is structured to depend on a single integrator, person $p_1$, who keeps together the four project modules defined by each task $t_1,\dots,t_4$.
Intuitively, once $p_1$ leaves, the project fractures into four isolated modules of one task and two people each.
In this fragmented state, the unavailability of even one more person puts the project at risk, as a task might then rely on only a single contributor.
Therefore, the bus-factor of this project should be between two and three people.

However, if we apply any of the aforementioned definitions, we find that MRS estimates the bus-factor to be 8 people (Fig.~\ref{fig:summary-of-the-paper}\textbf{B}), while MCS finds a bus-factor of 7 people (Fig.~\ref{fig:summary-of-the-paper}\textbf{C}).
This overestimation is a direct consequence of both the stalling condition defined via a threshold and the fact that coverage-based measures fail to account for integrators. 
This highlights the limitations of coverage-based measures and thresholds, and motivates the need for a bus-factor formulation that can account for project fragmentation as contributors leave.

\begin{figure}[t]
    \centering
    \includegraphics[width=\textwidth]{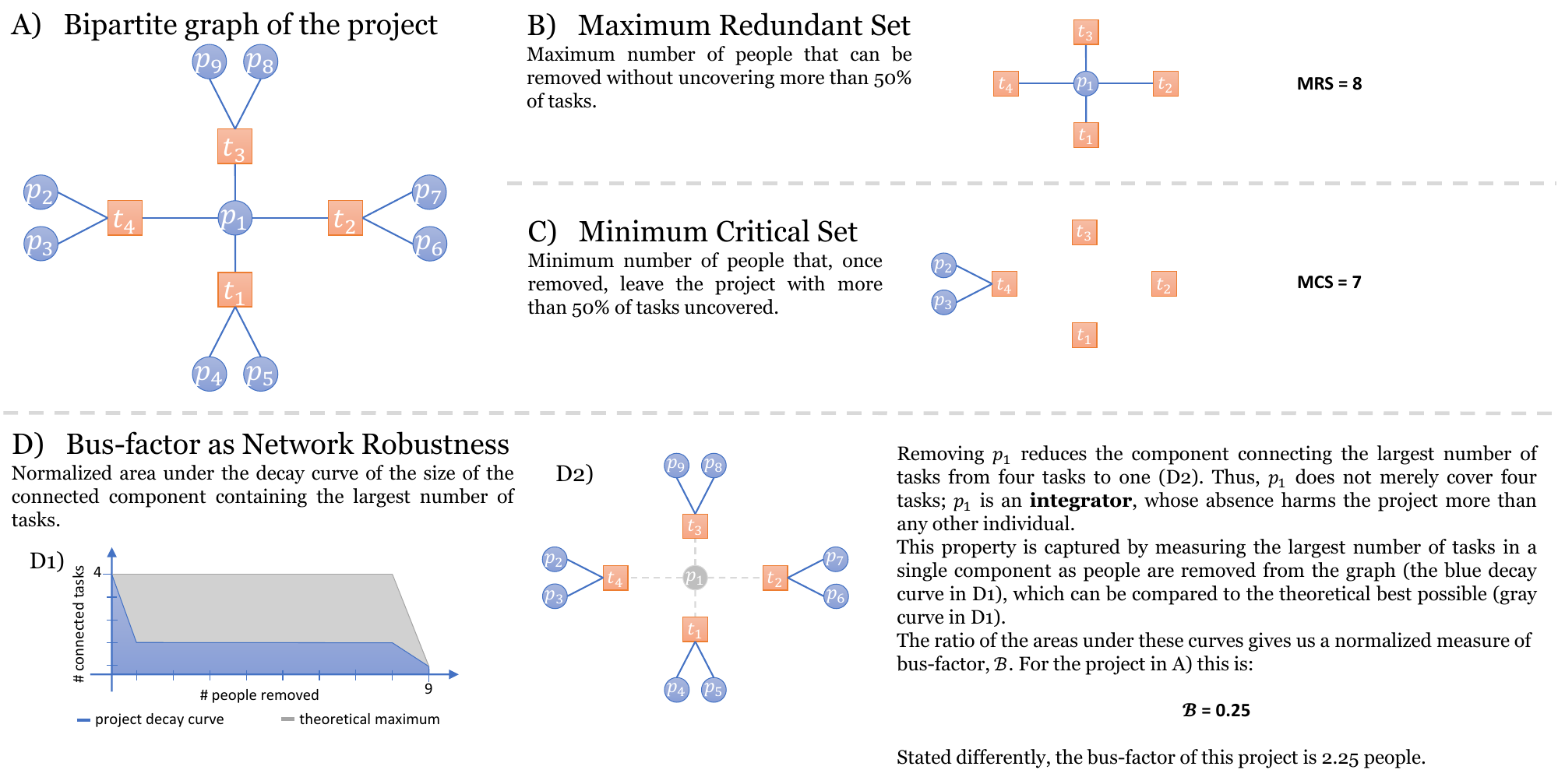}
    \caption{Visual comparison of the measures to estimate a project bus-factor. Consider the toy project in \textbf{A)}. When $p_1$ leaves, the project fragments into four disconnected components. Under further removals, tasks quickly become critically dependent on single contributors. \textbf{B)} Maximum Redundant Set (MRS) estimates a bus-factor of 8 by removing redundant contributors. \textbf{C)} Minimum Critical Set (MCS) estimates a bus-factor of 7 by applying a coverage threshold, ignoring project topology. \textbf{D)} Our robustness-based measure captures fragmentation by evaluating the largest number of tasks that remain connected in a single component as people leave; yielding a normalized, threshold-free bus-factor consistent with project intuition.}
    \label{fig:summary-of-the-paper}
\end{figure}

In this paper, we develop such a measure (hereafter referred to as \emph{Robustness}) by taking inspiration from network robustness and replacing task coverage with the largest number of tasks connected to a single component (Fig.~\ref{fig:summary-of-the-paper}\textbf{D}).
Furthermore, we avoid the need for fixed thresholds by computing the area under the decay curve of the maximum number of connected tasks as people are progressively removed from the project graph (Fig.~\ref{fig:summary-of-the-paper}\textbf{D1}).
We normalize this area by dividing it by the theoretical maximum, obtained on a fully connected bipartite graph with the same number of people and tasks.
Our measure gives us a normalized score of $0.25$, or equivalently a bus-factor of $2.25$ people, congruent with what we expected from our intuition.

Beyond introducing a unified theoretical framework for the computation of the bus-factor and a new measure, this paper provides a theoretical treatment of bus-factor definitions, establishes their computational complexity, proposes scalable approximation algorithms, and evaluates their practical behavior through a systematic experimental and sensitivity analysis guided by project management theory.
We hope that this paper can serve as a starting point for researchers and practitioners interested in the study of bus-factor measures.
To facilitate this process and future research in the field, we make our code and generated data publicly available \footnote{\url{https://github.com/sapiccolo/BusFactorX}}.

\paragraph{Summary of Contributions}
\label{sub:contributions}

We model a project as a bipartite graph $G = (P, T, E)$, where $P$ is the set of $n$ people and $T$ is the set of $m$ tasks, and cast the computation of the bus-factor as a family of combinatorial problems on such a graph, decoupling its definition from domain-specific artifacts.

In Section~\ref{sub:framework_prior}, we develop a unified, domain-agnostic theoretical framework by formalizing prior approaches and establishing an equivalence between redundancy-based and criticality-based interpretations of the bus-factor (Proposition~\ref{prop:equivalence_za}). 
This insight allows us to recast existing measures into two complementary formulations: the Maximum Redundant Set (MRS), identifying the largest set of contributors that can be removed without compromising project progress, and the Minimum Critical Set (MCS), identifying the smallest set whose removal stalls the project. 
We prove that both formulations are NP-hard (Theorems~\ref{thm:critical_set} and~\ref{thm:partial_set_cover}).

Building on this framework, in Section~\ref{sub:bf_net_robustness}, we introduce a new bus-factor measure (Robustness) inspired by network robustness, which evaluates the size of the largest connected set of tasks as contributors are progressively removed from the project graph. 
By defining the bus-factor as the normalized area under the resulting decay curve, the measure captures both loss of task coverage and increasing project fragmentation. 
We prove that this formulation is also NP-hard (Theorem~\ref{thm:network_robustness}), thereby closing an open problem posed by Zhao \textit{et al.}~\cite{zhao2021towards}.

We then develop efficient linear-time approximation algorithms for all three measures and analyze their worst-case approximation guarantees under degree-based removal strategies (Section~\ref{sec:practice}). 
In addition, we empirically evaluate their performance by comparing degree-based removal to standard centrality-based strategies commonly used in robustness analysis (Section~\ref{sec:performance}).

Finally, in Section~\ref{sec:experiments}, we perform a sensitivity analysis to assess the suitability of MRS, MCS, and Robustness as bus-factor indicators. 
We introduce a suite of tests that simulate intervention strategies on project structures, grounded in expectations derived from project management theories. 
Our results show that Robustness aligns consistently with these expectations and offers superior sensitivity and practical relevance, while MRS proves overly optimistic and MCS exhibits counterintuitive behavior due to its inability to capture project fragmentation.

\section{Related Work}
\label{sec:literature}

The bus-factor has been studied across several research communities, most prominently in software engineering, where it is used to assess the vulnerability of projects to the loss of key contributors. 
Despite its intuitive definition, prior work differs substantially in how projects are modeled, how contributor knowledge is inferred, and how stalling is defined.
In this section, we organize the literature around conceptual assumptions, with the goal of clarifying the strengths and limitations of existing approaches and motivating the need for a domain-agnostic formulation.

\paragraph{Artifact-based and GitHub-specific Bus-Factor Measures}

Most empirical studies of the bus-factor rely on data extracted from version control systems, particularly GitHub repositories.
In these approaches, tasks are typically equated with files, and contributor knowledge is inferred from historical activity such as commits, edits, or code ownership.

Zazworka \textit{et al.}~\cite{zazworka2010developers} introduced a family of coverage-based measures, defining a project as safe if a given percentage of files remains covered after contributor removal.
While conceptually appealing, this approach does not explicitly model developer knowledge~\cite{ricca2011difficulty} and becomes computationally infeasible for projects with more than a few dozen contributors~\cite{hannebauer2014algorithmic}.

Subsequent work refined how contributor knowledge is inferred.
Cosentino \textit{et al.}~\cite{cosentino2015assessing} introduced primary and secondary developers based on contribution thresholds, while Avelino \textit{et al.}~\cite{avelino2016novel} proposed a heuristic that iteratively removes the most knowledgeable developer, measured via Degree of Authorship (DoA), until more than 50\% of files become abandoned.
This heuristic has since been adopted in several follow-up studies~\cite{jabrayildaze2022bus}.

These approaches share common assumptions: tasks correspond to software artifacts, contributor knowledge is inferred from repository activity, and project stalling is determined via fixed thresholds.

\paragraph{Toward a Domain-Agnostic Bus-Factor Measure}

The limitations of artifact-based approaches have motivated more abstract formulations of the bus-factor that focus on the structure of the assignment of people to tasks.
Piccolo \textit{et al.}~\cite{piccolo2024evaluating} proposed modeling projects as bipartite graphs connecting contributors to tasks, effectively decoupling the estimation of contributions from the computation of the bus-factor.
This abstraction enables the application of bus-factor analysis across a wide range of collaborative settings beyond software engineering.
In the same work, they defined the bus-factor leveraging concepts from network robustness, and used their measure to increase the bus-factor of a real-world biomass power plant project.

However, their work does not provide a unified theoretical treatment of prior definitions, nor does it offer a systematic comparison with prior measures.
In addition, they do not analyze the limitations of task coverage-based formulations in capturing project fragmentation and the role of contributors who integrate multiple components of a project.

\paragraph{Network Robustness and Project Resilience}
The problem of assessing project vulnerability under contributor loss is closely related to the study of network robustness, which examines how the functionality of a network degrades as nodes or edges are removed. Seminal work by Albert \textit{et al.}~\cite{albert2000error} showed that networks with heterogeneous degree distributions are resilient to random failures but highly vulnerable to targeted attacks on nodes with high degree. 
This behavior has been observed across markedly different systems, including the Internet~\cite{albert2000error}, criminal networks~\cite{agreste2016networks}, and engineering systems~\cite{piccolo2018design}.
This generality makes network robustness the ideal starting point to build a theory of bus-factor estimation.

Theoretical efforts have explained the roots of this general behavior through percolation theory~\cite{callaway2000network} and subsequent studies introduced robustness measures based either on the size of the largest connected component under progressive node removal~\cite{schneider2011mitigation} or on the graph spectrum~\cite{shang2013measuring}.
More recent work has explored robustness in bipartite networks~\cite{shang2013measuring}, as well as percolation phenomena in hypergraphs, including the role of so-called anchor nodes in maintaining global connectivity~\cite{shang2025percolation}.

Despite these advances, classical robustness measures are not directly applicable to bus-factor estimation.
They typically treat all nodes as equivalent and focus on node reachability, whereas project networks exhibit an inherent asymmetry between contributors and tasks.
Moreover, robustness measures are rarely normalized in a way that enables meaningful comparisons across projects of different sizes.

Bridging the gap between network robustness and bus-factor estimation requires a formulation that respects the bipartite nature of projects, interprets robustness in terms of personnel loss, and captures both task coverage and project fragmentation.

\paragraph{Synthesis}
In summary, existing bus-factor measures either rely on domain-specific artifacts and threshold-based notions of project stalling, or adopt abstract formulations that lack a unified theoretical characterization and comparability across measures. 
Although network robustness provides a powerful conceptual framework for reasoning about the resilience of complex systems, classical robustness measures are not directly applicable to project settings.

Our work addresses these limitations by modeling projects as bipartite graphs of people and tasks and by interpreting the bus-factor through the lens of network robustness. 
Our framework unifies prior definitions based on redundancy or criticality, establishes their computational complexity, and introduces a normalized robustness-based measure that captures both task coverage and project fragmentation in a domain-agnostic manner.

\section{Preliminaries}
\label{sec:preliminaries}

Networks are a convenient mathematical formalism to represent and analyze complex systems, focusing on the way the components of such systems are connected. 
Components are represented as vertices, and their connections are represented by edges between vertices.
Graph methods are flexible enough to enable the representation and analysis of heterogeneous systems, with multiple layers of connections~\cite{kivela2014multilayer}, which evolve over time~\cite{holme2012temporal}. 

A graph is a pair $G = (V, E)$ where $V$ is the set of vertices and $E \subseteq V \times V$ is the set of edges. 
A graph is represented through its adjacency matrix $A$, with $A_{ij} = 1$ if vertices $i$ and $j$ are connected and $A_{ij} = 0$ otherwise. 
The degree $k_i$ of a vertex $i$ is the number of vertices connected to it; that is, $k_i = \sum_j A_{ij}$.

A \emph{subgraph} of a graph $G$ is the graph induced by a subset $S \subseteq V$ of vertices of $G$, and it is denoted by $G[S]$.
A \emph{connected component} of a graph is a maximal connected subgraph.
A fully connected subgraph is called a \emph{clique}. 
A set of nodes that includes at least one endpoint of each edge of $G$ is called a \emph{vertex cover}.

If the set of vertices $V$ can be divided into two sets $V_1$ and $V_2$ with $V_1 \cap V_2 = \emptyset$ and $V_1 \cup V_2 = V$, such that $\forall (i,j) \in E \; i \in V_1 \wedge j \in V_2$, then the graph is called \emph{bipartite}. 
In the remainder of this paper, we denote a bipartite graph as a tuple $\mathcal{G} = (V_1, V_2, E)$.

To establish the NP-hardness of computing the bus-factor under various formulations, we use reductions from known NP-hard problems. 
We utilize the following problems, formally defined below. Proofs of their NP-hardness can be found in Garey and Johnson~\cite{garey1979computers}.

\begin{description}
    \item[Set Cover] Given a universe $U$ of $n$ elements, a family $\mathcal{F} = \{ X_1, X_2, \dots, X_m \}$ of $m$ subsets of $U$ such that $\bigcup_i X_i = U$, and a positive integer $k \le m$, is there a subset $S \subseteq \mathcal{F}$ with $|S| \le k$ such that $\bigcup_{s \in S} s = U$?
    \item[Clique] Given an undirected graph $G = (V, E)$ and a positive integer $k \le |V|$, does $G$ contain a clique of size at least $k$?
    \item[Vertex Cover] Given an undirected graph $G = (V, E)$ and a positive integer $k \le |V|$, does $G$ contain a vertex cover of size at most $k$?
\end{description}

\section{The Theory of Computing the Bus-Factor}
\label{sec:theory}
In this section, we introduce a theoretical framework to understand and model the computation of the bus-factor of a project. 
Our theoretical contribution is twofold.
First, we systematize existing approaches as combinatorial problems on bipartite graphs under two complementary formulations, namely the Maximum Redundant Set and the Minimum Critical Set.
Second, with the flaws of existing approaches in mind, we formulate a novel bus-factor measure inspired by network robustness, which addresses these shortcomings.
For all proposed bus-factor formulations, we prove their NP-hardness. 

Our starting point is the abstraction of a project as a bipartite graph $G = (P, T, E)$, where $P$ is the set of $n$ people, $T$ is the set of $m$ tasks, and $E$ is the set of edges that represent the assignment of people to tasks.
In certain projects, such as open-source projects with many voluntary contributors, not every person contributes meaningfully to every file they edit.
For instance, a person might only update the license details of a file.
To address this issue, domain-informed or domain-specific graph filtering strategies can be applied, in which an edge between a person and a task is removed if that person does not possess meaningful knowledge of the task.
For example, GitHub-specific heuristics known as Degree of Authorship (DoA)~\cite{jabrayildaze2022bus}, which are designed to identify meaningful contributors for each source code file, naturally become graph filtering strategies in our framework.

We now consider existing approaches to computing the bus-factor and formalize them as two combinatorial problems on bipartite graphs.

\subsection{Systematizing Existing Approaches as Combinatorial Problems on Bipartite Graphs}
\label{sub:framework_prior}

Zazworka \textit{et al.}~\cite{zazworka2010developers} were the first to consider the problem of computing the bus-factor and defined the following family of measures:
\begin{equation}
    \label{eq:zazworka}
    Z_{\alpha,t}(G) = \max \{k \in \mathbb{N} : \mathrm{cov}_\alpha(G, k) \ge t m\}
\end{equation}

with the aggregation function $\alpha \in \{ \mathrm{min, max, avg} \}$ and $0 < t \le 1$. 
Consequently, $\mathrm{cov}_\alpha(G, k)$ indicates the minimum, maximum, or average coverage of $G$ -- that is, the min/max/avg number of tasks that are still covered -- after the removal of any subset of $k$ people.
Since $Z_{\mathrm{avg},t}(G)$ was found to be inadequate~\cite{hannebauer2014algorithmic}, we do not consider it here and limit our analysis to $Z_{\mathrm{min},t}(G)$ and $Z_{\mathrm{max},t}(G)$.
Intuitively, Eq.~\ref{eq:zazworka} defines the bus-factor in terms of \emph{structurally redundant} people who can leave the project without harming it.
The higher the number of structurally redundant people, the higher the bus-factor.

Avelino \textit{et al.}~\cite{avelino2016novel}, by contrast, informally defined the bus-factor as the smallest number of people whose departure isolates more than a fraction $t$ of the tasks, where $t$ is a threshold defining when the project is considered stalled.
We formalize the definition from Avelino \textit{et al.} as the problem of finding a \textit{Minimum Critical Set}, defined as follows:
\begin{equation}
    \label{eq:avelino_bf}
    MCS_{t}(G) = \min_{X \subseteq P } |X|,\quad \phi(G[P \setminus X]) > t m
\end{equation}

where $0 < t < 1$ and $\phi(G)$ returns the number of isolated tasks in $G$.
Note that Eq.~\ref{eq:avelino_bf} captures the informal notion that the bus-factor is the minimum number of people that have to become unavailable in order for a project to stall.
The larger the minimum critical set, the higher the bus-factor.

We now prove the following relation between $Z_{\min,t}(G)$ and $MCS_{t}(G)$.

\begin{proposition}
    \label{prop:equivalence_za}
    $Z_{\min,t}(G) = MCS_{1-t}(G) - 1$, for all $0 < t < 1$.
\end{proposition}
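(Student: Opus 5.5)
The plan is to turn both quantities into statements about a single monotone function and compare thresholds. For $k \in \{0,\dots,n\}$, let
\[
c(k)=\min_{X\subseteq P,\,|X|=k}\bigl(m-\phi(G[P\setminus X])\bigr).
\]
This is exactly $\mathrm{cov}_{\min}(G,k)$: the number of tasks still covered in the worst case after removing $k$ people.

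\textbf{Step 1 (monotonicity).} Removing more people can only isolate more tasks. Any $X$ of size $k+1$ contains a subset of size $k$ whose removal covers at least as many tasks. Hence $c(k+1)\le c(k)$, so $c$ is non-increasing, and $Z_{\min,t}(G)=\max\{k : c(k)\ge tm\}$ is the last index before $c$ drops below $tm$.

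\textbf{Step 2 (rewrite $MCS_{1-t}$).} By definition, $MCS_{1-t}(G)$ is the least $|X|$ with $\phi(G[P\setminus X])>(1-t)m$. This condition is equivalent to $m-\phi<tm$, i.e. the coverage after removing $X$ is strictly below $tm$. Therefore
\[
MCS_{1-t}(G)=\min\{k : c(k)<tm\}.
\]
The min over sets of size $k$ realises exactly this: some set of size $k$ drives coverage below $tm$ if and only if $c(k)<tm$.

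\textbf{Step 3 (combine).} Since $c$ is non-increasing, the set $\{k : c(k)\ge tm\}$ is an initial segment $\{0,\dots,K\}$. Its complement is $\{K+1,\dots\}$. So the first index with $c(k)<tm$ is $K+1$, which gives $MCS_{1-t}(G)=Z_{\min,t}(G)+1$.

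\textbf{Edge cases (the main obstacle).} The real care is needed at the boundaries of the range $0\le k\le n$.
\begin{itemize}
\item The initial segment is nonempty provided $c(0)\ge tm$. This holds under the implicit assumption that every task has at least one assignee, since then $c(0)=m\ge tm$ for $t\le1$.
\item For the critical set to exist, $c(n)=0<tm$ must hold. This is true because $t>0$ and $m\ge1$.
\end{itemize}
So $K<n$ and both quantities are well defined. I would state the no-isolated-task assumption explicitly; otherwise $c(0)$ could already be below $tm$, and the identity would need the convention $Z=-1$.

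Beyond that, the only subtle point is matching the strict inequality in Eq.~\ref{eq:avelino_bf} with the non-strict one in Eq.~\ref{eq:zazworka}. Step 2 handles this, because "$>(1-t)m$ isolated" is exactly the negation of "$\ge tm$ covered".
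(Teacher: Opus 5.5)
Your route is the same as the paper's. Both $Z_{\min,t}(G)$ and $MCS_{1-t}(G)$ become threshold conditions on $\mathrm{cov}_{\min}(G,k)$, and the two conditions are complementary. Your version is the more careful one. The paper never states the monotonicity needed to pass from ``largest $k$ satisfying the condition'' to ``smallest $k$ violating it, minus one'', and it does not address the boundary cases. Its rewriting also puts $\min_C \phi(G[P\setminus C])$ where $\max_C \phi(G[P\setminus C])$ is meant, since $\min_C(m-\phi)=m-\max_C\phi$. Your Step~2 gets this right by reading $c(k)<tm$ as ``some set of size $k$ drives coverage below $tm$''.

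The one step whose justification fails as written is Step~1. Write $\mathrm{cov}(X)=m-\phi(G[P\setminus X])$. You observe that every $Y$ with $|Y|=k+1$ contains some $X$ with $|X|=k$ and $\mathrm{cov}(X)\ge\mathrm{cov}(Y)$. Applied to a minimiser $Y$ at level $k+1$, this gives only $c(k+1)\le\mathrm{cov}(X)$. Since $c(k)\le\mathrm{cov}(X)$ as well, nothing compares $c(k)$ with $c(k+1)$. That subset argument in fact shows that the \emph{maximum} coverage over $|X|=k$ is non-increasing, not the minimum.

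You need the superset direction. For $k<n$, take $X$ with $|X|=k$ attaining $c(k)$ and any $p\in P\setminus X$. Removing more people can only isolate more tasks, so $c(k+1)\le\mathrm{cov}(X\cup\{p\})\le\mathrm{cov}(X)=c(k)$.

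With this one-line fix the proof is complete. Step~3 is then correct, and so is your boundary analysis: $c(n)=0<tm$ because $t>0$ and $m\ge 1$, and $c(0)\ge tm$ when no task is initially uncovered, with the convention $Z=-1$ otherwise.
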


\begin{proof}
    Let us recall that 
    \[
    Z_{\min,t}(G) = \max \{k \in \mathbb{N} : \mathrm{cov}_{\min}(G, k) \ge t m\}.
    \]
    The function $\mathrm{cov}_{\min}(G, k)$ returns the minimum number of tasks still covered by at least one person after removing $k$ people. That is,
    \[
    \mathrm{cov}_{\min}(G, k) = \min_{C \subseteq P,\; |C| = k} m - \phi(G[P \setminus C]).
    \]
   
    We can therefore rewrite $Z_{\min,t}(G)$ as follows:
    \begin{align*}
       Z_{\min,t}(G) 
       &= \max_{k \in \mathbb{N}} \min_{C \subseteq P,\; |C| = k} m - \phi(G[P \setminus C]) \ge t m \\
       &= \max_{k \in \mathbb{N}} \min_{C \subseteq P,\; |C| = k} \phi(G[P \setminus C]) \le m - t m \\
       &= \max_{k \in \mathbb{N}} \min_{C \subseteq P,\; |C| = k} \phi(G[P \setminus C]) \le m(1-t).
    \end{align*}
    Recall that $MCS_{1-t}(G)$ is the minimum number of people that isolates more than $m(1-t)$ tasks.
    That is, $MCS_{1-t}(G)$ is the smallest $k \in \mathbb{N}$ for which $\min_{C \subseteq P,\; |C| = k} \phi(G[P \setminus C]) > m(1-t)$.
    As such, $Z_{\min,t}(G) = MCS_{1-t}(G) - 1$.
\end{proof}

Note that $Z_{\min,t}(G)$ is also defined for $t = 1$, that is, the minimum number of people to remove such that the remaining coverage is less than $m$.
This case is not particularly interesting and is easily solvable, as it asks to isolate a single task with the minimum number of people, namely the task with the lowest degree.

We are now ready to prove that computing the Minimum Critical Set (MCS) is NP-hard.

\begin{theorem}
    \label{thm:critical_set}
    \textsc{Minimum Critical Set} is NP-hard.
\end{theorem}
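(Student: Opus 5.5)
We reduce from \textsc{Vertex Cover}, in its edge-isolation form: choosing a set of vertices that isolates every edge is the same as choosing a set that touches every edge. Given an instance $(H=(V,F),k)$ of \textsc{Vertex Cover}, we build the project graph $G=(P,T,E)$ by taking one person $p_v$ for each vertex $v\in V$ and one task $t_e$ for each edge $e\in F$. Task $t_{\{u,v\}}$ is joined exactly to $p_u$ and $p_v$. A task $t_e$ becomes isolated in $G[P\setminus X]$ precisely when both persons adjacent to it lie in $X$. The construction is therefore not quite vertex cover: here removing $X$ isolates the edges with \emph{both} endpoints in $X$, not the edges touching $X$.

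To get covering semantics we instead reduce from \textsc{Set Cover}, stated in complementary form. Given $(U,\mathcal{F},k)$, create one person $p_i$ for each set $X_i$ and one task $t_u$ for each element $u\in U$. Join $t_u$ to $p_i$ if and only if $u\notin X_i$. Then $t_u$ is isolated after removing a set $Y$ of persons exactly when every person outside $Y$ has $u\in X_i$. This is still not the right direction. We therefore switch to the natural formulation in which task $t_u$ is adjacent to $p_i$ iff $u\in X_i$. Now removing $Y\subseteq P$ isolates $t_u$ iff no remaining set contains $u$. Removing $Y$ isolates all tasks iff $P\setminus Y$ covers nothing, which is trivial. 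The correct reading is the reverse: keeping a set $S$ of persons leaves every task non-isolated iff $S$ is a set cover.

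This clarifies the plan. $MCS_t$ asks for a small removal isolating more than $tm$ tasks. By Proposition~\ref{prop:equivalence_za}, this is equivalent to computing $Z_{\min,t}$, i.e. a worst-case coverage question. The cleanest route is therefore a reduction from \textsc{Clique}, which matches the ``both endpoints removed'' semantics found in the first construction. Given $(H,k)$, build $G$ as above, with persons for vertices and tasks for edges, each task adjacent to its two endpoint persons. Removing $X$ with $|X|=k$ isolates exactly $|F(H[X])|$ tasks, which is at most $\binom{k}{2}$, with equality iff $X$ is a clique.

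To express this as an $MCS_t$ question we add padding so that the threshold $tm$ equals $\binom{k}{2}-1$. We add $D$ dummy tasks, each attached to its own fresh private dummy person, so that $m=|F|+D$ can be tuned and $t=(\binom{k}{2}-\tfrac12)/m$ lies in $(0,1)$. The intended equivalence is
\[
H \text{ has a } k\text{-clique} \iff MCS_t(G)\le k .
\]
The forward direction is immediate, since a clique $X$ isolates $\binom{k}{2}>tm$ tasks. For the converse, take a removal $X$ with $|X|\le k$. Each removed dummy person isolates only one task, and a set of $j$ vertex-persons isolates at most $\binom{j}{2}$ edge-tasks. The main obstacle is ensuring that mixing dummy persons with vertex persons cannot beat the clique bound. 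The total number of isolated tasks is at most $\binom{j}{2}+(|X|-j)$. For $k\ge 3$ this is below $\binom{k}{2}$ unless $j=k$ and $X$ is a clique, using $\binom{k}{2}-\binom{j}{2}\ge k-j$ whenever $j\le k$ and $k\ge 2$, with equality only in degenerate small cases that we handle separately.

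The reduction is polynomial, so computing $MCS_t$ is NP-hard for suitable rational $t$. We then argue that $t$ may be fixed, say $t=1/2$, by tuning $D$ together with added fully covered robust tasks so that the ratio lands exactly on the desired threshold.
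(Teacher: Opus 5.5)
Once you delete the opening detours through \textsc{Vertex Cover} and \textsc{Set Cover}, your reduction is the paper's. It goes from \textsc{Clique} to \textsc{Critical Set} via the vertex--edge incidence graph, with the threshold placed just below $\binom{k}{2}$. The paper takes $t=\frac{k(k-1)-2}{2|T|}$, i.e.\ $tm=\binom{k}{2}-1$, while you take $tm=\binom{k}{2}-\tfrac12$. The key fact is the same: removing $j$ vertex-persons isolates at most $\binom{j}{2}$ edge-tasks, with equality at $j=k$ exactly when the removed set is a clique.

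The one real slip comes from your padding. Dummy tasks with private persons are cheap to isolate. Your claim that $\binom{j}{2}+(|X|-j)<\binom{k}{2}$ for all $j<k$ once $k\ge 3$ is false at $(k,j)=(3,0)$. Three dummy persons isolate $3>\tfrac52$ tasks, so with $D\ge 3$ every $k=3$ instance becomes a yes-instance. Your auxiliary inequality $\binom{k}{2}-\binom{j}{2}\ge k-j$ also already fails at $k=2$, $j=0$.

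The repair is easy, and any one of these works:
\begin{itemize}
\item Note that the inequality is strict for all $k\ge 4$ and $0\le j<k$, and \textsc{Clique} restricted to $k\ge 4$ stays NP-hard.
\item Drop the padding, as the paper does. If $\binom{k}{2}>|F|$ the \textsc{Clique} instance is trivially negative; otherwise $t=(\binom{k}{2}-\tfrac12)/|F|$ already lies in $(0,1)$.
\item Make padding tasks unisolatable by attaching each of them to $k+1$ shared padding persons.
\end{itemize}

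Finally, your closing claim that $t$ can be fixed to $1/2$ is not needed, since the decision problem takes $t$ as input. It also fails as sketched. Both kinds of added tasks only increase $m$, so when $|F|\ge 2\binom{k}{2}$ no $k$ removals can isolate more than $m/2$ tasks. Shifting the baseline would require tasks that are isolated from the start.
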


\begin{proof}
    The \textsc{Minimum Critical Set} is an optimization problem. To show that it is NP-hard, we show that its decision version (i.e., a problem that admits a yes/no answer), which we call \textsc{Critical Set}, is NP-complete.
    In \textsc{Critical Set}, we are given a bipartite graph $G = (P, T, E)$, a positive integer $k$, and a real number $0 < t < 1$, and we ask whether there exists a set $X \subseteq P$ with $|X| \le k$ such that $\phi(G[P \setminus X]) > t m$.
    The problem \textsc{Critical Set} is in NP, since a candidate solution $X$ can be verified in polynomial time by computing the induced subgraph $G[P \setminus X]$, counting the number of isolated tasks, and checking whether this number exceeds $t m$.
    
    We show that \textsc{Critical Set} is NP-hard by reducing \textsc{Clique} to it. 
    In \textsc{Clique}, given a simple undirected graph $G' = (V', E')$ and a positive integer $k' < |V'|$, we ask whether there exists a fully connected subgraph of size at least $k'$.
    We reduce \textsc{Clique} to \textsc{Critical Set} as follows: we create a bipartite graph $G$ where $P = V', T = E'$, $E = \left\{ (a,b), a \in P, b \in T \quad \text{iff}\; a\; \text{is an endpoint of}\; b \right\}$, $k = k'$, and $t = \frac{k(k-1)-2}{2|T|}$ (Fig.~\ref{fig:Hardness_MCS_MRS}\textbf{A}).

    In this instance of \textsc{Critical Set}, we ask whether there exists a subset $C \subseteq P$ of size at most $k$ such that $\phi(G[P \setminus C]) > t|T| = \frac{k(k-1)-2}{2|T|}|T| = \frac{k(k-1)-2}{2}$.
    This instance of \textsc{Critical Set} has a positive answer if and only if the original \textsc{Clique} instance has a clique of size at least $k$.
    If $G'$ has a clique of size at least $k$, then the same set of nodes forms a solution to the corresponding instance of \textsc{Critical Set}.
    Indeed, a clique of $k$ nodes is the only subgraph on $k$ nodes that contains exactly $k(k-1)/2$ edges.
    Since in our bipartite construction we have $P = V'$ and $T = E'$, the same set of nodes in the clique of $G'$ corresponds to a subset $C \subseteq P$ such that $\phi(G[P \setminus C]) > \frac{k(k-1)-2}{2|T|}|T| > \frac{k(k-1)-2}{2}$.

    Conversely, if the bipartite graph $G$ admits a critical set $C$ of size at most $k$ such that $\phi(G[P \setminus C]) > \frac{k(k-1)-2}{2}$, then this set of $k$ nodes from $P$ isolates at least $k(k-1)/2$ nodes in $T$.
    However, by construction, and since $G'$ is simple and undirected, a subgraph of $k$ nodes in $G'$ can contain at most $k(k-1)/2$ edges.
    This implies that the set $C$ must consist of exactly $k$ nodes and isolate exactly $k(k-1)/2$ nodes in $T$.
    Therefore, $C$ corresponds to a clique of $k$ nodes in $G'$.
    
    As a result, \textsc{Clique} $\le_{\mathrm{p}}$ \textsc{Critical Set}, and hence \textsc{Critical Set} is NP-complete.
\end{proof}

\begin{figure}[t]
    \centering
    \includegraphics[width=\textwidth]{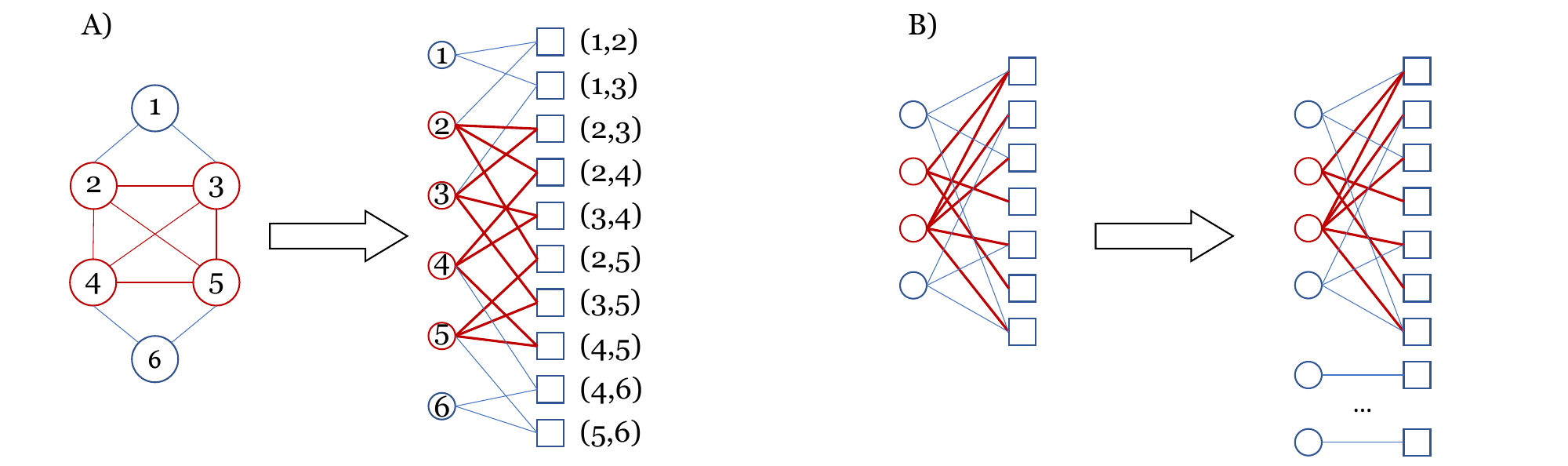}
    \caption{Visualizations of the NP-hardness reductions. \textbf{A}) Reduction from \textsc{Clique} to \textsc{Minimum Critical Set} (solutions highlighted): nodes in the original graph become the left vertex set ($P$) in the bipartite construction, and edges become the right vertex set ($T$). Edges in the bipartite graph represent the incidences of the original graph. A clique in the original graph corresponds to a minimum critical set in the bipartite graph. \textbf{B}) Reduction from \textsc{Set Cover} to \textsc{Partial Set Cover} for the \textsc{Maximum Redundant Set} proof: an instance of \textsc{Set Cover}, represented here as a bipartite graph with its solution highlighted, is transformed into an instance of \textsc{Partial Set Cover} by adding an appropriate number of bipartite vertex pairs.}
    \label{fig:Hardness_MCS_MRS}
\end{figure}

As a consequence of Proposition~\ref{prop:equivalence_za}, in addition to the \textsc{Minimum Critical Set} we only need to consider $Z_{\max,t}(G)$, which models a best-case scenario by identifying the maximum number of people who can safely become unavailable without uncovering (i.e., isolating) more than $t\%$ of tasks.
We formulate this quantity as the following \textsc{Maximum Redundant Set} problem on a bipartite graph:
\begin{equation}
    \label{eq:mrs}
    MRS_{t}(G) = \max_{X \subseteq P} |X| \quad \text{s.t.} \quad \phi(G[P \setminus X]) < t m,\quad 0 < t \le 1.
\end{equation}

Note that the MRS defined in Eq.~\ref{eq:mrs}, when $t = 1$, asks for a set which is the complement of the \textit{Minimum Set Cover}.
Set Cover admits a natural representation as a bipartite graph $G = (P, T, E)$, where the set $P$ corresponds to $\mathcal{F}$, the set $T$ corresponds to the universe $U$, and edges between $P$ and $T$ encode the coverage of each subset in $\mathcal{F}$: $E = \{ (i, j) \mid i \in P,\ j \in T,\ \text{and } j \in s_i \text{ for some } s_i \in \mathcal{F} \}$.

Similarly, for a threshold $0 < t < 1$, MRS asks for a set that is the complement of a \textit{Partial Set Cover}.
Given a universe $U = \{u_1, u_2, \dots, u_m\}$ of $m$ elements, a family $\mathcal{F} = \{S_1, S_2, \dots, S_n\}$ of $n$ subsets $S_i \subseteq U$ such that $\bigcup_i S_i = U$, a positive integer $k$, and a real number $0 < t < 1$, the Partial Set Cover problem asks whether there exists a subset $X \subseteq \mathcal{F}$ with $|X| \le k$ such that $\left| \bigcup_{X_i \in X} X_i \right| \ge t m$.
We now prove the following results.

\begin{theorem}
\label{thm:partial_set_cover}
\textsc{Partial Set Cover} is NP-complete.
\end{theorem}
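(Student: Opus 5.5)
Membership in NP is immediate: given a candidate $X \subseteq \mathcal{F}$, we check $|X| \le k$ and count $\left|\bigcup_{X_i \in X} X_i\right|$ in polynomial time. For hardness we reduce from \textsc{Set Cover}, following the construction sketched in Fig.~\ref{fig:Hardness_MCS_MRS}\textbf{B}.

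The reduction starts from a \textsc{Set Cover} instance $(U, \mathcal{F}, k)$ with $|U| = m$ and $|\mathcal{F}| = n$. Because the threshold must satisfy $t < 1$, we cannot set $t = 1$ directly. Instead we pad the instance: we add $r$ new elements $v_1, \dots, v_r$ and $r$ new singleton sets $\{v_1\}, \dots, \{v_r\}$. In bipartite terms, each padding step adds one person--task pair joined by an edge. We keep $k' = k$ and choose
\[
t = \frac{m}{m+r}
\]
with $r \ge 1$, for instance $r = 1$. Then $0 < t < 1$, the padded family still covers the padded universe, and the construction is clearly polynomial.

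Correctness has two directions. In the forward direction, a cover of $U$ of size at most $k$ covers $m = t(m+r)$ elements of the new universe, so it is a partial cover. The converse is the delicate step, because a partial cover might use padding singletons in place of original sets. Let $X$ have $|X| \le k$ and cover at least $m$ elements. Suppose $X$ contains $j$ padding singletons. These contribute exactly $j$ covered elements. Each of the remaining at most $k - j$ original sets covers only elements of $U$. Hence those original sets cover at least $m - j$ elements of $U$.

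To finish, we turn this into a genuine cover. We replace each padding singleton by an original set containing some still-uncovered element of $U$; such a set exists because $\bigcup_i S_i = U$. Each replacement covers at least one new element of $U$. After $j$ replacements we therefore have at most $k$ original sets covering all $m$ elements of $U$, which is a set cover of size at most $k$.

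This exchange argument is the main obstacle, and the right choice of $t$ is what makes it work. Setting $t(m+r) = m$ exactly ensures that padding elements can only substitute for original elements one-for-one. If the exact count causes trouble, I would fall back on making the padding larger: add $r$ elements each duplicated across many sets, so that they are automatically covered.
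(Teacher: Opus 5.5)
Your proof is correct and uses the same gadget as the paper: fresh elements paired with singleton sets, with the budget $k$ unchanged. The two arguments differ in two ways that matter.

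First, you make $t$ part of the constructed instance, choosing $t = m/(m+r)$ so that the threshold is exactly $m$. The paper instead fixes an arbitrary $t$ and pads with $\lceil m/t\rceil - m$ elements. That aims at the stronger claim that the problem stays NP-hard for every fixed threshold, which is the relevant case since MRS is used with $t = 0.5$. As written, however, the paper's padding breaks the forward direction whenever $m/t$ is not an integer. Then $t\lceil m/t\rceil > m$, so a cover of the original universe no longer reaches the threshold. For example, $m=3$ and $t=0.4$ give threshold $3.2$, and a yes-instance whose minimum cover has size exactly $k$ becomes a no-instance. Padding with $\lfloor m/t\rfloor - m$ elements repairs this. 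Since $m-1 < t\lfloor m/t\rfloor \le m$, reaching the threshold is again equivalent to covering $m$ elements.

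Second, for the converse the paper asserts that the padding singletons ``cannot contribute'' to a solution. That is false as stated. With $U=\{a,b\}$, $\mathcal{F}=\{\{a\},\{b\}\}$, $k=2$ and $t=1/2$, the selection $\{\{a\},\{e_1\}\}$ is a valid partial cover. What is true is that any solution can be converted into a cover of $U$. Your exchange argument supplies exactly this missing step: replace each of the $j$ singletons by an original set that hits a still-uncovered element.

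Your exchange argument carries over verbatim to the fixed-$t$ version with floor padding. Combining it with the floor correction therefore gives the paper's stronger fixed-threshold statement rigorously. Your version gives only hardness with $t$ in the input, which suffices for the theorem as stated. Your closing fallback about heavier padding is not needed.
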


\begin{proof}
    We show that \textsc{Partial Set Cover} is NP-complete by reducing \textsc{Set Cover} to it.
    The problem \textsc{Partial Set Cover} is in NP, since a candidate solution can be verified in polynomial time by counting the number of distinct elements it covers and checking whether this number is at least $t m$.

    Suppose we are given an instance of \textsc{Set Cover} with universe $U'$, family of subsets $\mathcal{F}'$, and a positive integer $k'$.
    Let $0 < t \le 1$. We transform this instance into an instance of \textsc{Partial Set Cover} as follows.
    We define the universe $U$ by adding $\lceil \frac{m}{t} \rceil - m$ new elements $e_i$ to $U'$, and we define the family $\mathcal{F}$ by adding $\lceil \frac{m}{t} \rceil - m$ subsets, each equal to $\{e_i\}$, to $\mathcal{F}'$.
    We also set $k = k'$ (Fig.~\ref{fig:Hardness_MCS_MRS}\textbf{B}).

    In this instance of \textsc{Partial Set Cover}, we ask whether there exists a subset $X \subseteq \mathcal{F}$ with $|X| \le k$ such that $\left| \bigcup_{X_i \in X} X_i \right| \ge tm$.
    This instance has a solution of size at most $k$ if and only if the original instance of \textsc{Set Cover} has a set cover of size at most $k'$.
    If the original instance of \textsc{Set Cover} admits a solution of size at most $k'$, then this solution is also a solution to the corresponding instance of \textsc{Partial Set Cover}.
    
    Conversely, suppose that the constructed instance of \textsc{Partial Set Cover} has a solution of size at most $k$ that covers at least $t m$ elements.
    Since the $\lceil \frac{m}{t} \rceil - m$ added subsets each cover exactly one element, they cannot contribute to a solution of size at most $k$ that covers at least $t m$ elements.
    Therefore, any such solution must consist solely of subsets from $\mathcal{F}'$ and hence corresponds to a set cover of size at most $k'$ in the original instance of \textsc{Set Cover}.
\end{proof}

\begin{corollary}
    \label{thm:redundant_set}
    \textsc{Maximum Redundant Set} is NP-hard.
\end{corollary}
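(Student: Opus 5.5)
The corollary follows by reading Theorem~\ref{thm:partial_set_cover} through the complement relation observed just before it. Under the bipartite encoding of a set system, a set $X \subseteq P$ leaves at most a given number of tasks isolated exactly when $P \setminus X$ covers at least the complementary number of tasks. Maximizing $|X|$ is therefore the same as minimizing the size $|P \setminus X|$ of a partial cover.

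The key steps are as follows.

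\textbf{Decision version.} I will define the decision version \textsc{Redundant Set}. We are given $G = (P, T, E)$, an integer $r$, and a threshold $t$, and we ask whether some $X \subseteq P$ with $|X| \ge r$ satisfies $\phi(G[P \setminus X]) < t m$.

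\textbf{Reduction.} Take an instance $(U, \mathcal{F}, k, t')$ of \textsc{Partial Set Cover} with $|U| = m$ and $|\mathcal{F}| = n$. Build the bipartite graph with $P = \mathcal{F}$, $T = U$, and incidence edges. Set $r = n - k$.

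\textbf{Threshold translation.} A cover $Y = P \setminus X$ covers at least $t' m$ elements exactly when $\phi(G[P \setminus X]) \le m - t' m$. The strict inequality in Eq.~\ref{eq:mrs} needs care. Since $\phi$ is an integer, $\phi \le (1-t')m$ is equivalent to $\phi < \lfloor (1-t')m \rfloor + 1$. So I choose $t = (\lfloor (1-t')m \rfloor + 1)/m$. If this exceeds $1$, then $\lfloor (1-t')m \rfloor \ge m$ only when $t' \le 0$, which is excluded, so $t \le 1$ always holds.

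\textbf{Correctness.} $X$ with $|X| \ge n-k$ is feasible for MRS if and only if $P \setminus X$ is a partial cover of size at most $k$. Both directions are immediate from the complement relation, and the map is polynomial. Hence computing $MRS_t(G)$ exactly decides \textsc{Partial Set Cover}, and MRS is NP-hard.

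The main obstacle is this threshold bookkeeping. The MRS inequality is strict, while the partial cover inequality is $\ge$, so $t$ must be rounded as above to keep $0 < t \le 1$. One alternative is to reduce directly from \textsc{Set Cover} through the $t=1$ case, where $\phi < m$ is not the same as full coverage. That route would instead need padding, for example adding one dummy task adjacent to a dummy person who is forced to be kept. I prefer the rounding argument via Theorem~\ref{thm:partial_set_cover}.
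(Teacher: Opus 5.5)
Your proposal is correct and follows the paper's route: the paper's proof consists of observing that a redundant set is the complement of a partial set cover and then invoking Theorem~\ref{thm:partial_set_cover}. Your choice $t = (\lfloor (1-t')m \rfloor + 1)/m$ spells out the translation between the coverage threshold of \textsc{Partial Set Cover} and the strict isolation threshold of Eq.~\ref{eq:mrs}, which the paper's one-line argument leaves implicit. Your caution about $t=1$, where $\phi < m$ does not mean full coverage, is also well founded.
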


\begin{proof}
    Since a Redundant Set is the complement of a Partial Set Cover, the proof follows from Theorem \ref{thm:partial_set_cover}.
\end{proof}

\subsection{Limitations of MCS and MRS}
\label{sub:coverage_limitations}

Both MCS and MRS ignore the underlying project structure by treating contributors solely in terms of task coverage.
This prevents them from capturing the role of integrators~\cite{piccolo2018design}, which, from a topological point of view, correspond to \textit{anchor nodes} in bipartite graphs (and hypergraphs), i.e., nodes whose removal fragments the network~\cite{shang2025percolation}.

Similarly, MCS and MRS assume that a task can be carried out as long as there is at least one person to cover it.
From a risk and project management perspective, this unrealistic simplification implies that it is possible to increase the bus-factor of a project simply by adding \emph{singletons} (i.e., people connected to only one task) for each task, which is rarely the case in real projects~\cite{lawrence1986organization}.

In fact, it can be verified that both measures assign a high bus-factor to a project represented by a graph of disconnected dyads (i.e., one person connected to one task).
We will experimentally demonstrate this flaw in Section~\ref{sec:experiments}.
Additionally, both measures rely on an arbitrary threshold $t$ (usually set to 0.5) to define the stalling condition of a project; consequently, the value of a project’s bus-factor depends on $t$.
Finally, MCS and MRS are not normalized, making them dependent on project size and hindering meaningful comparisons across projects.

Consequently, in the next section, building on our prior work~\cite{piccolo2024evaluating}, we introduce a novel definition of the bus-factor that overcomes the aforementioned weaknesses.

\subsection{A Better Measure: Bus-Factor as Network Robustness}
\label{sub:bf_net_robustness}

We base our measure on the maximum number of connected tasks, that is, the largest number of tasks contained in a single connected component.
As already illustrated in Fig.~\ref{fig:summary-of-the-paper}, using the maximum number of connected tasks allows us to capture the role of integrators, since it reflects both task coverage and connectedness.
Furthermore, by computing the area under the decay curve of the maximum number of connected tasks as people are progressively removed, we eliminate the need to define explicit stalling conditions, and thus remove the dependence on arbitrary thresholds (Section~\ref{sub:coverage_limitations}).
Finally, by normalizing this area by the theoretical maximum---obtained for a fully connected bipartite graph with the same number of people and tasks---we obtain a measure that can be interpreted in two ways: \emph{(1)} as the relative robustness of the project with respect to the maximum possible, and \emph{(2)} as the expected number of people whose removal would put the project at risk, obtained by multiplying the normalized bus-factor by the total number of people in the project.

Before defining our measure formally, we introduce the concept of a \emph{backbone set}.

\begin{definition}[Backbone set]
    Given a project represented as a bipartite graph $G = (P, T, E)$, a set $S \subseteq P$ is a backbone set if the induced bipartite subgraph $G[P \setminus S]$ consists of a collection of disconnected \emph{stars}, each centered on a single person (i.e., a vertex in $P$).
\end{definition}

Intuitively, a backbone set contains the integrators of a project.
The smaller the backbone set, the smaller the project’s bus-factor and, consequently, the more vulnerable the project.
However, we are not only interested in the cardinality of a project’s backbone set; rather, we focus on the maximum number of connected tasks as people in a backbone set are sequentially removed from the project.
Note that \emph{(1)} a set of disconnected stars is precisely the configuration for which MCS and MRS fail, and \emph{(2)} once $G$ has been reduced to a collection of disconnected stars, the maximum damage to the project is inflicted by removing people in decreasing order of their degree.

Let $\pi = (p_1, p_2, \dots, p_k)$ be a removal order such that the set $S = \{p \mid p \in \pi\}$ is a backbone set of size $k$.
Let $G^{\pi}_{i} = G[P \setminus \{p_1, p_2, \dots, p_i\}]$, with $G^{\pi}_{0} = G$, and let $\tau(G)$ denote the maximum number of connected tasks in $G$.
We measure the area under the decay curve of $\tau(G^{\pi}_{i})$ for $i = 1, \dots, k$ as
\begin{equation}
    \label{eq:robustness_pi}
    R(G, \pi) = \sum_{i=1}^{k} \tau \left(G^{\pi}_{i}\right).
\end{equation}

This definition allows us to generalize a widely used notion of network robustness to bipartite graphs (and hypergraphs) by defining robustness as the minimum possible area under the decay curve of the maximum number of connected tasks in $G$:
\begin{equation}
    \label{eq:robustness}
    \mathcal{R}(G) = \min_{\pi} R(G, \pi).
\end{equation}

We now show that computing bipartite network robustness, as defined in Eq.~\ref{eq:robustness}, is NP-hard.

\begin{theorem}
    \label{thm:network_robustness}
    \textsc{Bipartite Network Robustness} is NP-hard.
\end{theorem}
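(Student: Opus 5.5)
We prove the decision version NP-complete: given $G$ and a bound $B$, is there a removal order $\pi$ whose elements form a backbone set with $R(G,\pi)\le B$? Membership in NP is immediate, since for a candidate $\pi$ we can check the star condition on $G[P\setminus S]$ and evaluate the $k$ values $\tau(G^\pi_i)$ by connected-component computations. For hardness we reduce from \textsc{Vertex Cover}, the problem from the Preliminaries not yet used. We reuse the incidence construction from the proof of Theorem~\ref{thm:critical_set}, adding padding that makes every step of the decay curve cost essentially the same amount. Then minimizing the area is equivalent to minimizing the length of $\pi$.

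\emph{Construction.} Let $G'=(V',E')$ and $k'$ be an instance of \textsc{Vertex Cover}, with $n'=|V'|$ and $m'=|E'|$. Build $G$ as follows.
\begin{itemize}
\item Take $P\supseteq V'$ and $T\supseteq E'$, joining $v$ to $e$ exactly when $v$ is an endpoint of $e$.
\item Add $n'+1$ disjoint padding stars. Each consists of a new person $q_j$ joined to $M$ new private tasks, with $M = k'm'+1$.
\item Set $B=k'(M+m')$.
\end{itemize}
First we characterize backbone sets. Every edge-task has degree $2$, and the padding stars are already stars centered on persons. Hence $S\subseteq P$ is a backbone set iff no edge-task keeps both endpoints in $G[P\setminus S]$. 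This holds iff $S\cap V'$ is a vertex cover of $G'$.

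\emph{Per-step bounds on $\tau$.} The padding components share no task with the original part. So at any step, the largest connected task set either lies inside one padding star, giving at most $M$ tasks, or inside the original part, giving at most $m'$ tasks. Thus $\tau(G^\pi_i)\le M+m'$ always; in fact the bound $\max(M,m')$ holds. For the lower bound, suppose at most $n'$ persons have been removed. Then at least one of the $n'+1$ padding centers survives, so $\tau(G^\pi_i)\ge M$.

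\emph{Correctness.} Suppose $G'$ has a vertex cover $C$ with $|C|\le k'$. Any ordering $\pi$ of $C$ is a backbone order, and it gives $R(G,\pi)\le |C|(M+m')\le B$. Conversely, suppose a backbone order $\pi$ has $R(G,\pi)\le B$, and suppose for contradiction that $|\pi|\ge k'+1$. Since $k'<n'$ without loss of generality, the first $k'+1$ steps remove at most $n'$ persons, so each of these steps contributes at least $M$. This gives
\[
R(G,\pi)\ge (k'+1)M=k'M+k'm'+1>B,
\]
a contradiction. Hence $|\pi|\le k'$, and $\pi\cap V'$ is a vertex cover of size at most $k'$.

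\emph{Expected main obstacle.} The delicate step is controlling $\tau$ uniformly along an \emph{arbitrary} order $\pi$. An adversarial order could remove padding centers early to drive $\tau$ down and make a long $\pi$ cheap. This is exactly why we use $n'+1$ copies rather than a single padding star, and why we charge only the first $k'+1$ steps in the converse. One should also double-check degenerate cases, such as isolated vertices of $G'$ or $m'=0$, which can be removed or handled trivially beforehand. Finally, the construction is polynomial: it has $O(n'^2m')$ new vertices, since $M=O(n'm')$.
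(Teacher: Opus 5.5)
Your proof is correct, and it takes a genuinely different route from the paper.

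\textbf{What the paper does.} The paper's decision version has an explicit cardinality parameter: it asks for a backbone order of length at most $k$ with $R(G,\pi)\le Q$. On the bare incidence graph it sets $k=k'$ and $Q=k|T|$. Since $\tau\le |T|$ at every step, every order of length at most $k$ meets the area condition automatically. So the paper's argument reduces to your backbone characterization (backbone sets of the incidence graph are exactly the vertex covers), and the area plays no role.

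\textbf{What you do differently.} Your decision version has no length parameter; it is exactly the question $\mathcal{R}(G)\le B$ for the minimum that defines $\mathcal{R}(G)$. The area alone must therefore force $|\pi|\le k'$. Your $n'+1$ padding stars with $M=k'm'+1>k'm'$ private tasks do this:
\begin{itemize}
\item whatever an adversarial order removes, at least one padding center survives the first $k'+1$ steps, so each of those steps costs at least $M$;
\item $(k'+1)M=B+1$, so any order of length $k'+1$ or more exceeds $B$.
\end{itemize}
The padding is not cosmetic. On the bare incidence graph, a minimum-area backbone order can be strictly longer than a minimum vertex cover. Take $G'$ to be a star with center $h$ and leaves $\ell_1,\ell_2,\ell_3$, where each $\ell_i$ has a pendant neighbor $p_i$. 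The unique minimum vertex cover $\{\ell_1,\ell_2,\ell_3\}$ gives area $5+4+3=12$ in any order. The backbone order $(h,\ell_1,\ell_2,\ell_3)$ gives $2+2+2+1=7$. So the paper's length bound could not simply be dropped from its reduction.

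\textbf{What each route buys.} The paper's route is shorter and causes no blow-up, but it only shows that deciding whether a small backbone set exists is hard. Yours adds $O(n'^2m')$ tasks, but it establishes hardness of the minimum-area problem itself, which is what the statement and the claimed resolution of Zhao et al.'s question actually concern.
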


\begin{proof}
    We show that the decision version of \textsc{Bipartite Network Robustness} is NP-complete.
    An instance of \textsc{Bipartite Network Robustness} consists of a bipartite graph $G = (P, T, E)$ and two positive integers $k$ and $Q$, and asks whether there exists a removal sequence $\pi$ that is also a backbone set of length at most $k$ such that $R(G, \pi) \le Q$.
    
    First, note that \textsc{Bipartite Network Robustness} is in NP, since verifying a candidate solution $\pi$ requires computing $R(G, \pi)$, which can be done in polynomial time.
    In Section~\ref{sub:approx_net_robustness}, we will in fact provide a linear-time algorithm to compute $R(G, \pi)$.
    
    We show that \textsc{Bipartite Network Robustness} is NP-hard by reducing \textsc{Vertex Cover} to it.
    Given an instance of \textsc{Vertex Cover} on a graph $G' = (V', E')$ with parameter $k'$, we construct an instance of \textsc{Bipartite Network Robustness} as follows.
    We transform $G'$ into a bipartite graph $G = (P, T, E)$, where $P = V'$, $T = E'$, and $E = \{ (a,b) \mid a \in P,\ b \in T,\ \text{and } a \text{ is an endpoint of } b \}$ (the same transformation used for MCS, in Fig.~\ref{fig:Hardness_MCS_MRS}\textbf{A}).
    We set $k = k'$ and $Q = k|T|$.
    
    If $G'$ has a vertex cover $S$ of size at most $k'$, then removing the vertices in $S$ from $G'$ yields a graph consisting only of isolated vertices.
    Since our construction preserves incidence structure, removing the corresponding vertices from $P$ in $G$ results in a graph composed of disconnected stars, each centered on a vertex in $P \setminus S$ and connected to a number of vertices in $T$ equal to its degree in $G'$.
    Thus, $G$ admits a backbone set of size at most $k$.
    
    Conversely, suppose that $G$ admits a backbone set $\pi$ of size at most $k$ such that $R(G, \pi) \le Q$.
    The worst-case scenario for a backbone set of size $k$ occurs when people are removed in such a way that the maximum number of connected tasks never decreases throughout the removal process.
    In this case, $\tau(G^{\pi}_i) = |T|$ for all $i = 1, \dots, k$, and therefore $R(G, \pi) = k|T| = Q$.
    This implies that removing the vertices in $\pi$ must remove all edges in $G'$, and hence the corresponding vertices form a vertex cover of size at most $k'$ in $G'$.
    
    Therefore, $G$ admits a backbone set $\pi$ of size at most $k$ such that $R(G, \pi) \le Q$ if and only if $G'$ has a vertex cover of size at most $k'$, and thus \textsc{Vertex Cover} $\le_{\mathrm{p}}$ \textsc{Bipartite Network Robustness}.
\end{proof}

Our measure of robustness $\mathcal{R}(G)$ generalizes a widely used measure of network robustness for undirected graphs, introduced by Schneider \textit{et al.}~\cite{schneider2011mitigation}.
Zhao \textit{et al.}~\cite{zhao2021towards} noted that establishing the computational hardness of the robustness measure proposed by Schneider \textit{et al.} was still an open problem.
Theorem~\ref{thm:network_robustness} resolves this open problem.

The measure in Eq.~\ref{eq:robustness_pi} is conceptually related to the notion of \emph{natural connectivity} of a bipartite graph~\cite{shang2013measuring}, a spectral measure that captures the redundancy of alternative paths in a network.
Natural connectivity is a more appropriate robustness measure when flows and the length of paths must be taken into account.
In the case of the bus-factor, however, we do not deal with flows in the network; as such, measuring the maximum number of connected tasks is more appropriate.
In addition, computing natural connectivity is more expensive than computing the maximum number of connected tasks, since it depends on all the eigenvalues of the adjacency matrix of $G$.

Finally, we introduce our normalized measure of the bus-factor as the ratio between the robustness of the bipartite graph $G$, computed with a removal sequence $\pi \in \Pi(P)$ (where $\Pi(P)$ denotes the set of all possible permutations of the people in $P$) and the robustness of a fully connected bipartite graph of the same size:
\begin{equation}
    \mathcal{B}(G, \pi) =
    \frac{2}{m(2n - 1)}
    \sum_{i=1}^{n}
    \left[
        \tau\left(G^{\pi}_{i-1}\right) + \tau\left(G^{\pi}_{i}\right)
    \right],
    \quad \text{with } G^{\pi}_{0} = G.
\label{eq:normalised_bus_factor}
\end{equation}

Note that in Eq.~\ref{eq:normalised_bus_factor} we use a slightly different formulation from that of network robustness in Eq.~\ref{eq:robustness_pi}.
These are simply two different ways of estimating the area under the curve: Eq.~\ref{eq:robustness_pi} corresponds to a Gauss sum, whereas Eq.~\ref{eq:normalised_bus_factor} applies the trapezoidal rule, which provides a more accurate estimate of the area under a curve.
The normalized bus-factor $\mathcal{B}(G, \pi)$ takes values in $[0,1]$ and can be interpreted as the proportion of a project’s robustness relative to the theoretical maximum, thereby enabling comparisons across projects.
For instance, a bus-factor of $\mathcal{B}(G, \pi) = 0.5$ indicates that $G$ is half as robust as a fully connected graph of the same size.

Since computing the bus-factor of a project is NP-hard, in the next section we present efficient and scalable algorithms to approximate the bus-factor for all the definitions considered in our framework.

\section{The Practice of Approximating the Bus-Factor}
\label{sec:practice}

In this section, we present approximation algorithms for all the bus-factor formulations analyzed in the previous section, addressing both approximation quality and scalability.
For MCS and Robustness, we parameterize our algorithms by a removal order $\pi$, so that the approximation quality depends solely on $\pi$.
This design allows the same algorithms to be applied with alternative removal orders, should these lead to improved approximations.

\subsection{Approximating \textsc{Maximum Redundant Set}}
\label{sub:approx_redundant_set}

In Section~\ref{sub:framework_prior}, we showed that a redundant set is the complement of a partial set cover; therefore, any approximation algorithm for the minimum partial set cover yields an approximation algorithm for the Maximum Redundant Set.
The best possible polynomial-time approximation algorithm for Set Cover, unless P$=$NP, guarantees a solution that is at most a factor $O(\log m)$ larger than the optimal solution~\cite{cormen2022introduction}.
The same algorithm, with the same performance guarantees, can be used to approximate Partial Set Cover, since both problems can be seen as special cases of submodular set cover~\cite{wolsey1982analysis}.

Recall that MRS takes as input a graph $G = (P, T, E)$ and a real number $0 < t \le 1$, and returns the largest set of people that cover fewer than $t m$ tasks.
Conceptually, our approximation algorithm proceeds as follows: at each iteration, the person who covers the largest number of currently uncovered tasks is added to the set $S$, until more than $t m$ tasks are covered.
Since $S$ is an approximate solution to the minimum partial set cover, the set $X = P \setminus S$ constitutes an approximate solution to the MRS.
Adapting a technique from Lim \textit{et al.}~\cite{lim2014lazy}, we present a nearly linear-time approximation algorithm for MRS that scales to very large bipartite graphs (Algorithm~\ref{alg:max_redundant_set}).

\begin{algorithm}
\begin{algorithmic}[1]
\Function{Max-Redundant-Set}{$G,\; t$}
    \State $\mathcal{Q} \gets \mathrm{MaxPriorityQueue()}$ \Comment{Priority queue ordering people by task coverage}
    \State $\texttt{covered} \gets \mathrm{BitArray}(m)$ \Comment{Bit array indicating covered tasks; initially all bits are unset}
    \State $S = \emptyset$ \Comment{Stores the partial set cover}
    \ForAll{$p \in G\mathrm{.people()}$}
        \State $\mathcal{I}[p] \gets G\mathrm{.neighbors}(p)$ \Comment{Array tracking the tasks currently covered by each person}
        \State $\mathcal{Q}[p] \gets G\mathrm{.degree}(p)$
    \EndFor

    \While{$\texttt{covered}.\text{count}() < tm$}
        \State $p,\; \texttt{priority} \gets \mathcal{Q}\mathrm{.pop()}$
        \State $s \gets \mathcal{I}[p]$
        \State $\texttt{effective\_coverage} \gets s \setminus \texttt{covered}$
        \If{$|\texttt{effective\_coverage}| = \texttt{priority}$}
            \Comment{If the coverage of $p$ is up to date, add $p$ to $S$}
            \State $S \gets S \cup \{p\}$
            \State $\texttt{covered}\mathrm{.union}(\texttt{effective\_coverage})$
        \Else
            \Comment{Otherwise, update the coverage of $p$ and reinsert $p$ into $\mathcal{Q}$}
            \State $\mathcal{I}[p] \gets \texttt{effective\_coverage}$
            \State $\mathcal{Q}\mathrm{.push}(p,|\texttt{effective\_coverage}|)$
        \EndIf
    \EndWhile
    \State \textbf{return} $n - |S|,\; P \setminus S$ \Comment{MRS is the complement of a partial set cover}
\EndFunction
\end{algorithmic}
\caption{Approximation algorithm for the Maximum Redundant Set (MRS)}
\label{alg:max_redundant_set}
\end{algorithm}

In our algorithm, we use a priority queue $\mathcal{Q}$ to schedule the person with the highest task coverage, and we represent the set of $\texttt{covered}$ tasks and the set $S$ as bit arrays (Algorithm~\ref{alg:max_redundant_set}, lines~2--4).
For each person, we keep track of the uncovered tasks in a contiguous array of segments $\mathcal{I}$ and prioritize people by their degree (lines~5--8).
At each iteration, we pop a person $p$ from $\mathcal{Q}$ and check whether their priority is equal to the current number of uncovered tasks, $\texttt{effective\_coverage}$ (lines~9--13).
If so, we add $p$ to the partial set cover $S$ and update the set of covered tasks (lines~13--16); otherwise, we push $p$ back into the queue with updated priority and task coverage (lines~16--19).
When the number of covered tasks reaches at least $tm$, the loop terminates, and the algorithm returns the set $P \setminus S$, along with its cardinality, as the solution to MRS (lines~20--22).

Algorithm~\ref{alg:max_redundant_set} is an $O(n)$-approximation for the MRS.
The following theorem shows that this approximation ratio is the best achievable by a polynomial-time algorithm, unless $\mathrm{P}=\mathrm{NP}$.

\begin{theorem}
    \label{thm:aprox_max_redundant_set}
    Unless $\mathrm{P}=\mathrm{NP}$, it is NP-hard to approximate \textsc{Maximum Redundant Set} to within a factor $O(n^{1-\epsilon})$, for every constant $\epsilon > 0$, where $n$ is the number of people.
\end{theorem}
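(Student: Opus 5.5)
The plan is an approximation-preserving reduction from \textsc{Maximum Independent Set}. That problem is NP-hard to approximate within $|V'|^{1-\epsilon}$ for every constant $\epsilon>0$ (H{\aa}stad; derandomized by Zuckerman). The construction is the incidence graph already used for Theorem~\ref{thm:critical_set}. Given a simple graph $G'=(V',E')$ with no isolated vertices (these can be removed and added back to any independent set), build $G=(P,T,E)$ with $P=V'$ and $T=E'$, joining a person to a task iff the person is an endpoint of the edge. Then $m=|E'|$, every task has degree exactly $2$, and every task is initially covered.

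The key observation is this. A task $e=\{u,v\}$ is isolated in $G[P\setminus X]$ iff both $u\in X$ and $v\in X$. Choose the threshold $t=1/m$, which is allowed since $0<t\le 1$. The constraint $\phi(G[P\setminus X])<tm=1$ then says that no task is isolated. Hence $X$ is feasible for MRS iff $X$ contains no edge of $G'$, i.e., iff $X$ is an independent set of $G'$.

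This gives two consequences:
\begin{itemize}
\item $MRS_{1/m}(G)=\alpha(G')$, where $\alpha(G')$ is the independence number of $G'$.
\item Feasible solutions correspond identically, with the same objective value, and $n=|P|=|V'|$.
\end{itemize}
So any polynomial-time algorithm approximating MRS within $O(n^{1-\epsilon})$ would approximate \textsc{Maximum Independent Set} within $O(|V'|^{1-\epsilon})$, which implies $\mathrm{P}=\mathrm{NP}$. The reduction is clearly polynomial.

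The main obstacle is the choice $t=1/m$. If one insists that $t$ be a fixed constant (e.g.\ $t=1/2$, as used in practice), the threshold must be absorbed by padding. I would add a single dummy person $d$ adjacent to $M$ fresh dummy tasks, with $M$ chosen so that $M< t(m+M)\le M+1$; this is possible for any fixed $t<1$ by taking $M\approx tm/(1-t)$ and adjusting by rounding.

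With this padding, any $X$ that omits $d$ may leave at most $\lfloor t(m+M)\rfloor$ tasks uncovered. Including $d$ costs exactly $M$ uncovered tasks and leaves room for no further isolated original edges. A short case check (whether $d\in X$, and if not, how many original edges may be isolated) shows that the optimum is $\alpha(G')+1$ up to an additive constant, while $n=|V'|+1$. Both changes are lower order and preserve the $n^{1-\epsilon}$ gap. I would verify that rounding calculation carefully, since it is the only place the argument could break; the unpadded version with $t=1/m$ already suffices for the statement as written.
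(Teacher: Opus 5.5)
Your core argument is correct and is essentially the paper's proof. You use the same incidence-graph construction ($P=V'$, $T=E'$), the observation that under a full-coverage threshold $X$ is feasible exactly when it is an independent set of $G'$, and Zuckerman's $n^{1-\epsilon}$ bound for \textsc{Maximum Independent Set}, transferred with $n=|V'|$. The one real difference is the threshold, and here your version is the more careful one. The paper takes $t=1$, relying on its reading of MRS as the complement of a full set cover. Under Eq.~\ref{eq:mrs} as literally written, $t=1$ only imposes $\phi(G[P\setminus X])<m$, which $X=P\setminus\{p\}$ satisfies for any person $p$ with at least one task. Your choice $t=1/m$ is what actually forces $\phi=0$ under the stated definition.

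The padding paragraph for a fixed constant $t$ is wrong, although your proof does not need it. In the case $d\notin X$ the $M$ dummy tasks stay covered. The constraint then only requires $X$ to span at most $M\approx tm/(1-t)=\Theta(m)$ edges of $G'$, not zero. For instance, with $t=1/2$ your condition $M<t(m+M)\le M+1$ gives $M\in\{m-2,m-1\}$. Then $X=V'\setminus\{v\}$, for any vertex $v$ of degree at least $2$, is feasible. So the optimum is at least $n-2$ regardless of $\alpha(G')$, and the gap disappears. The gadget consumes the slack only if the optimizer chooses to remove $d$, which it never needs to do.

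To absorb a constant threshold you need slack that is consumed unconditionally, e.g.\ $M$ dummy tasks with no contributor at all. Then $\phi=M+|E_{G'}(X)|$, the same choice of $M$ forces $|E_{G'}(X)|=0$, and $n=|V'|$ is unchanged. This is legitimate only if the model admits uncovered tasks, which the paper's partial-set-cover framing ($\bigcup_i S_i=U$) does not. Otherwise, drop the claim and keep $t$ as part of the input, as both you and the paper effectively do.
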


\begin{proof}[Proof (sketch)]
The Maximum Independent Set problem is the complement of the Minimum Vertex Cover problem~\cite{garey1979computers}.
Using the same transformation previously employed to prove the hardness of MCS and network robustness, we can transform an undirected, non-bipartite graph $G'$ into an equivalent bipartite graph $G$.
Computing the minimum set cover on $G$ is equivalent to computing the minimum vertex cover on $G'$~\cite{garey1979computers}; therefore, computing MRS on $G$ (with $t = 1$) is equivalent to computing the maximum independent set on $G'$.
However, unless $\mathrm{P}=\mathrm{NP}$, it is NP-hard to approximate the Maximum Independent Set problem to within a factor $O(n^{1-\epsilon})$ for every constant $\epsilon > 0$~\cite{zuckerman2006linear}.
Since MRS contains the Maximum Independent Set problem as a special case, this hardness result transfers to MRS.
\end{proof}

\subsection{Approximating \textsc{Minimum Critical Set}}
\label{sub:approx_critical_set}

To approximate the minimum critical set, we propose a node percolation approach in which people are progressively removed from the graph according to a removal order $\pi$.
We initialize an array containing the number of people covering each task, i.e., their degree (Algorithm~\ref{alg:min_critical_set}, lines~2--4).
As people are removed from $G$, we decrease the degree of the tasks to which they are connected (lines~6--9), update the count of isolated tasks when a task $\mathfrak{t}$ becomes isolated (lines~10--12), and track the number of people removed (lines~13--14).
When the number of isolated tasks exceeds $tm$ (with $t$ typically set to $0.5$~\cite{avelino2016novel}), the algorithm terminates, returning the cardinality of the MCS (lines~15--17).

\begin{algorithm}
\begin{algorithmic}[1]
\Function{Min-Critical-Set}{$G,\; t,\; \pi$}
    \ForAll{$\mathfrak{t} \in G\mathrm{.tasks()}$}
        \State $\texttt{task\_degs}[\mathfrak{t}] \gets G\mathrm{.degree}(\mathfrak{t})$ \Comment{Keeps track of task degrees as people are removed from $G$}
    \EndFor
    \State $\mathrm{mcs\_size, num\_isolated} \gets 0$ \Comment{Size of MCS, and number of isolated tasks}
    \While{$\mathrm{num\_isolated} \le tm$}
        \State $p \gets \pi \mathrm{.next()}$
        \ForAll{$\mathfrak{t} \in G\mathrm{.neighbors}(p)$}
            \State $\texttt{task\_degs}[\mathfrak{t}] \gets \texttt{task\_degs}[\mathfrak{t}] - 1$ \Comment{Decreases the coverage of the tasks covered by $p$}
            \If{\texttt{task\_degs}[$\mathfrak{t}$] = 0}
                \State $\mathrm{num\_isolated} \gets \mathrm{num\_isolated} + 1$ \Comment{Increases the count of isolated tasks}
            \EndIf
        \EndFor
        \State $\mathrm{mcs\_size} \gets \mathrm{mcs\_size} + 1$ \Comment{Increases the count of people removed so far (current size of the MCS)}
    \EndWhile
    \State \textbf{return} $\mathrm{mcs\_size}$
\EndFunction
\end{algorithmic}
\caption{Bus-factor as minimum critical set}\label{alg:min_critical_set}
\end{algorithm}

Algorithm~\ref{alg:min_critical_set} scales linearly in the number of edges since, at most, it has to process each person and their associated tasks, leading to a computational complexity of $O(|E|)$.
Since Algorithm~\ref{alg:min_critical_set} expects a removal order $\pi$ as input, the natural question is how to choose $\pi$.
A sensible choice is to remove people in decreasing order of degree (as done by Avelino \textit{et al.}~\cite{avelino2016novel}), which tends to work well in practice.
In fact, real-world networks are sparse and exhibit heavy-tailed degree distributions: relatively few nodes have high degree, while the vast majority have low degree~\cite{albert2002statistical}.
This characteristic implies that high-degree nodes are of paramount importance to the connectivity of the network.
For instance, Shang has shown that, in bipartite graphs, the removal of a relatively small fraction of high-degree nodes severely reduces the network’s \emph{natural connectivity}~\cite{shang2013measuring}.

Although removal by degree works well in practice for the aforementioned reasons, we now show that this removal order provides, in the worst case, an $O(n)$-approximation for the minimum critical set.

\begin{theorem}
    \label{thm:aprox_min_critical_set}
    Algorithm \ref{alg:min_critical_set}, when the removal order $\pi$ is defined in decreasing order of degree, is an $O(n)$-approximation for the \textsc{Minimum Critical Set} problem.
\end{theorem}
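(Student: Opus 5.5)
An $O(n)$-approximation guarantee only requires comparing the output of Algorithm~\ref{alg:min_critical_set} with the optimum and showing that their ratio is at most linear in $n = |P|$. The plan is a simple sandwich argument. For the upper bound, the algorithm removes at most $n$ people, so its output $\mathrm{ALG}$ satisfies $\mathrm{ALG} \le n$. This requires the loop to terminate. Since every task has at least one incident person (tasks of degree zero are already isolated and only lower both quantities), removing all of $P$ isolates all $m > tm$ tasks for $0<t<1$. Hence the \textbf{while} condition fails after at most $n$ iterations, and the returned $\mathrm{mcs\_size}$ is feasible for Eq.~\ref{eq:avelino_bf}. Here I use the fact that the set of the first $\mathrm{mcs\_size}$ people of $\pi$ is a valid critical set. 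For the lower bound, I will argue that whenever $\phi(G) \le tm$, so that the problem is nontrivial, $MCS_t(G) \ge 1$. Combining the two bounds gives $\mathrm{ALG}/MCS_t(G) \le n$, that is, an $O(n)$ ratio.

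If one wants a less trivial statement, the bound can be refined using the degree order. Let $k^* = MCS_t(G)$ with optimal set $X^*$. The isolated tasks produced by $X^*$ number more than $tm$, and each is adjacent only to members of $X^*$, so $\sum_{p\in X^*} k_p > tm$. Hence the maximum degree satisfies $\Delta \ge tm/k^*$. The greedy order removes the highest-degree people first, and one might hope to bound how long it takes to accumulate isolation. However, isolation is not additive in degree, so degree order does not directly control $\phi$. I therefore expect any refinement beyond the trivial bound to fail in general, and I will keep the argument at $\mathrm{ALG}\le n \le n\cdot k^*$.

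The main obstacle is showing the bound is not vacuous, that is, that degree order can really be $\Theta(n)$ times worse. To exhibit tightness, I would build a bad instance. Take a small set of $k^*=2$ people who alone cover more than $tm$ tasks, each such task being adjacent to exactly these two people, with each person of moderate degree. Then add many "hub" people of slightly larger degree whose tasks are also covered by distinct singletons, so that removing the hubs isolates nothing. Degree order removes all $\Theta(n)$ hubs before touching the critical pair, giving $\mathrm{ALG} = \Theta(n)$ against $\mathrm{OPT}=2$. The care needed is in choosing degrees so that the hubs strictly dominate the critical pair while the pair's shared tasks still exceed $tm$. I would balance this by giving each hub many tasks that are each shared with one private singleton of degree one.
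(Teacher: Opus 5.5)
Your first paragraph is correct, and it is the argument for the statement as literally written. The loop ends within $n$ iterations because removing all of $P$ isolates $m>tm$ tasks. The prefix of $\pi$ that it removes is a feasible critical set. And $MCS_t(G)\ge 1$ whenever $\phi(G)\le tm$. Together these give $\mathrm{ALG}\le n\cdot\mathrm{OPT}$. This differs from the paper, whose proof never argues this upper bound. It only builds an instance on which degree order is $\Omega(n)$ times worse than optimal, i.e.\ it shows that the trivial bound you prove is tight. The remark after the theorem, comparing $n-1$ removed people with an optimum of $1$, reads it the same way. So your third paragraph is where you target what the paper actually proves, and there your instance does not work as designed.

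The trouble is that you made the optimal set high-degree. Each member of your critical pair has degree at least $s>tm$, so every hub needs degree $d\ge s>tm$.

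If each hub has its own tasks with private singletons, as you propose, then $m\ge s+Hd>(1+H)\,tm$, which forces $H<1/t-1$. For any fixed $t$ the algorithm therefore removes fewer than $1/t+1$ people, and the ratio is $O(1/t)$, independent of $n$. The singletons inflate $n$ but are never removed, and giving hubs more tasks only tightens the constraint. The ratio can grow with $n$ only if $t$ shrinks like $1/n$ with $d$ bounded, which is the opposite of your balancing advice.

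Even if the hubs share one task block, $m\ge s+d>2tm$ rules out $t\ge 1/2$, including the customary $t=0.5$.

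The missing idea, which is the paper's construction, is to make the optimal set consist of \emph{low}-degree people and let the decoys share a \emph{fixed} task block. Concretely, take $tm+1$ disjoint person--task dyads plus a complete bipartite block between $P_2$ and $T_2$. Nothing in $T_2$ is isolated until the entire block has been removed, so $\mathrm{ALG}\ge |P_2|=n-tm-1$ while $\mathrm{OPT}\le tm+1$. Keeping $t$ and $m$ fixed and letting $|P_2|$ grow gives ratio $\Omega(n)$ for every fixed $t$, with no singletons needed.
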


\begin{proof}
    We give a constructive proof of this theorem by constructing an instance for which Algorithm \ref{alg:min_critical_set} has an approximation ratio of $O(n)$.
    Let $ 0 < t < 1$, $n,m \in \mathbb{N}^+$, such that $n > tm + 1$.
    We create a graph $G = (P,T,E)$ as follows: the set $P$ contains $n$ nodes, the set $T$ contains $m$ nodes.
    We partition the set $T$ into two subsets $T_1$ and $T_2$ such that $|T_1| = tm + 1$ and $|T_2| = (1 - t)m - 1$, with $T_1 \cup T_2 = T$ and $T_1 \cap T_2 = \emptyset$.
    Similarly, we partition $P$ into two subsets $P_1$ and $P_2$ such that $|P_1| = tm + 1$ and $|P_2| = n - (tm + 1)$, with $P_1 \cup P_2 = P$ and $P_1 \cap P_2 = \emptyset$.
    We now define the edge set $E$.
    Each person in $P_1$ is connected to exactly one task in $T_1$, forming $tm + 1$ disconnected dyads.
    In addition, we construct a fully connected bipartite graph between the people in $P_2$ and the tasks in $T_2$.
    Let $O$ denote an optimal critical set and let $X$ denote the critical set returned by Algorithm~\ref{alg:min_critical_set} when $\pi$ removes people in decreasing order of degree.
    The optimal solution is $O = P_1$.
    However, the approximation algorithm removes people in $P_2$ first, since they have higher (or equal) degree than those in $P_1$.
    Consequently, $|X| \ge |P_2|$.
    As $t$ approaches zero, $|P_1|$ approaches one, while $|P_2|$ approaches $n - 1$.
    Therefore, $|X| = \Omega(n)\,|O|$, implying an approximation ratio of $O(n)$.
\end{proof}

Put differently, in the worst case, the removal order by people degree can return a minimum critical set as large as $n-1$ people, while the optimal value is as small as $1$.

\subsection{Approximating the Bus-Factor as \textsc{Bipartite Network Robustness}}
\label{sub:approx_net_robustness}
Computing Robustness requires, as people are removed from $G$, iterating over all connected components in $G$, counting the tasks they contain, and taking the maximum among them.
A naive implementation of this process, based on $n$ repeated applications of depth-first search (DFS), would lead to an algorithm that scales quadratically in the number of people, with computational complexity $O(n^2 m)$.

Here, we propose an efficient implementation of this process by tracking the maximum number of connected tasks using a Union–Find data structure: a disjoint-set structure in which each set is represented as a tree, and the root of the tree is the representative element of the set.
When two sets are merged, the roots of the corresponding trees are connected~\cite{cormen2022introduction}.
If, instead of removing people, we add people in the reverse order of removal, this data structure allows us to track the maximum number of connected tasks as components grow.
This idea leads to the following linear-time algorithm\footnote{The same algorithm can be used to compute the robustness of any arbitrary network.}, with a computational complexity of $O(|E|)$.

\begin{algorithm}
\begin{algorithmic}[1]
\Function{Bus-Factor}{$G,\; \pi$}
    \State $\mathcal{U} \gets \text{UnionFind}(G\mathrm{.num\_tasks()})$ 
        \Comment{A Union--Find structure for the $m$ tasks}
    \State $\texttt{tau} \gets [0]$ 
        \Comment{A vector storing the values $\tau\left(G_i^\pi\right)$}
    \ForAll{$p \in \pi\mathrm{.reverse()}$}
        \State $\texttt{neigs} \gets G\mathrm{.neighbors}(p) \cup \{p\}$ 
            \Comment{Gets the tasks to which $p$ is connected}
        \State $\mathcal{U}\mathrm{.union}(\texttt{neigs})$ 
            \Comment{Connects these tasks into a single component in $\mathcal{U}$}
        \State $\texttt{tau}\mathrm{.push}\!\left(
            \mathcal{U}\mathrm{.max\_num\_connected\_tasks}()
        \right)$ 
            \Comment{Computes the maximum number of connected tasks $\tau\left(G_i^\pi\right)$}
    \EndFor
    \State $\mathcal{B} \gets \frac{2}{m(2n - 1)} \sum_{i=1}^n \left(\texttt{tau}[i-1] + \texttt{tau}[i] \right)$ 
        \Comment{Computes Robustness according to Eq.~\ref{eq:normalised_bus_factor}}
    \State \textbf{return} $\mathcal{B}$
\EndFunction
\end{algorithmic}
\caption{Bus-factor as network robustness}
\label{alg:bus-factor}
\end{algorithm}

Algorithm~\ref{alg:bus-factor} takes as input the graph $G$ and a removal sequence $\pi$.
We initialize a Union–Find data structure $\mathcal{U}$ with $m$ disjoint tasks (line~2) and process the removal sequence $\pi$ in reverse order, from the last person to be removed to the first (line~4).
Each time we process a person $p$, we merge the components corresponding to the tasks to which $p$ is connected (lines~5--6) and store in the vector \texttt{tau} the maximum number of connected tasks in $\mathcal{U}$ (line~7).
Finally, the bus-factor $\mathcal{B}$ is computed by applying Eq.~\ref{eq:normalised_bus_factor} to the vector \texttt{tau}.

Removing people by degree again offers a good approximation of bipartite network robustness in practice.
Piccolo \textit{et al.} have empirically demonstrated the effectiveness of degree-based removal for studying the robustness of a real-world project involving the construction of a biomass power plant~\cite{piccolo2018design}.
Shang~\cite{shang2013measuring}, by contrast, verified through simulations the predictions of percolation theory on bipartite networks regarding the removal of high-degree nodes, and subsequently extended the theory by introducing the concept of \emph{anchor nodes}~\cite{shang2025percolation}, which are closely related to our notion of integrators.

Although removal by degree works well in practice, we shall prove that it yields an $O(n)$-approximation of network robustness in the worst case.

\begin{theorem}
    \label{thm:approx_bus_factor}
    The removal sequence $\pi$ defined in decreasing order of degree is an $O(n)$-approximation for the \textsc{Bipartite Network Robustness} problem.
\end{theorem}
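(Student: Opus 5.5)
The plan is to prove the upper bound by a sandwich argument, not by analysing the degree order in detail. I will bound the area under the decay curve of \emph{every} removal sequence $\pi$, from above and from below, by multiples of one quantity that depends only on $G$, namely $\tau(G)$. The approximation ratio of any order, and in particular of the decreasing-degree order, is then at most the ratio of the two bounds, which will be $O(n)$.

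First I fix the objective. I use the form in which the decay curve is anchored at $G^{\pi}_0 = G$, i.e.\ the trapezoidal area $S(\pi)=\sum_{i=1}^{n}\bigl[\tau(G^{\pi}_{i-1})+\tau(G^{\pi}_{i})\bigr]$ of Eq.~\ref{eq:normalised_bus_factor}. Dividing by the constant $m(2n-1)/2$, which is the same for every $\pi$, does not change ratios. The basic observation is that $\tau(G^{\pi}_i)$ is non-increasing in $i$. Removing a person only deletes edges, so every connected component of $G^{\pi}_{i}$ lies inside a component of $G^{\pi}_{i-1}$. Hence $\tau(G^{\pi}_i)\le\tau(G)$ for all $i$.

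\emph{Upper bound.} Each of the $2n$ summands of $S(\pi)$ is at most $\tau(G)$, so $S(\pi)\le 2n\,\tau(G)$ for every $\pi$. This holds in particular for the decreasing-degree order $\pi_{\deg}$.

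\emph{Lower bound.} For every $\pi$ the first summand is $\tau(G^{\pi}_0)=\tau(G)$, and all other summands are non-negative. So $S(\pi)\ge\tau(G)$, and in particular the optimum satisfies $\min_{\pi} S(\pi)\ge\tau(G)$. Combining the two bounds gives $S(\pi_{\deg})\le 2n\,\tau(G)\le 2n\min_{\pi}S(\pi)$, which is an $O(n)$ approximation.

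The main obstacle is reconciling this with the unanchored sum $R(G,\pi)$ of Eq.~\ref{eq:robustness_pi}, which starts at $i=1$ and stops at a backbone set. Without the $\tau(G)$ term, the optimum can be $0$: for example, when a single integrator $p_1$ is the only person with degree above one, removing $p_1$ already isolates every task it touches. In that case no multiplicative bound is meaningful. I would handle this in two steps.
\begin{enumerate}
\item State the theorem for the anchored area above, which is the quantity actually returned by Algorithm~\ref{alg:bus-factor}.
\item For $R$ itself, restrict to instances with $\mathcal{R}(G)\ge 1$, and use the extra fact that degree order removes the maximum-degree person first. Then $\tau(G^{\pi_{\deg}}_1)$ can be compared with the component left after the optimum's first removal, and each later term of $R(G,\pi_{\deg})$ is still at most $\tau(G)$.
\end{enumerate}
If the second step does not close, the anchored version alone still gives the $O(n)$ guarantee for the bus-factor $\mathcal{B}$. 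The proof would also point out that the upper bound is loose because it holds for any order, not only the degree order.
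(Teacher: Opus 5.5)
Your anchored sandwich is correct, but it proves a different statement, and one that holds for \emph{every} permutation, so the degree order plays no role in it. The theorem is about $\mathcal{R}(G)=\min_\pi R(G,\pi)$. The paper's proof evaluates $R$ as the unanchored sum $\sum_{i=1}^{n}\tau(G^{\pi}_i)$ over the whole permutation, with last term $0$. For that objective your step~2 is where the proof has to happen, and as sketched it does not close. Take a person $A$ adjacent to $s\ge n-1$ tasks and $n-1$ further people, each adjacent to one distinct task of $A$. Then $\tau(G)=s$, while $\mathcal{R}(G)=n-1$ (remove $A$ first); keeping $A$ to the end costs $(n-1)s$. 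So bounding the later terms by $\tau(G)$ gives no ratio depending on $n$ alone. Even with monotonicity, a first-step comparison yields only an $(n-1)^2$ factor. The same graph also refutes your zero-optimum example. After $p_1=A$ leaves, the degree-one people still hold one task each, so $\tau(G_1)\ge 1$. The optimum is $0$ only when one person carries every edge, and then the degree order removes that person first and also scores $0$.

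The missing idea is a per-step comparison through the sorted degrees $d_1\ge\dots\ge d_n$. It uses the fact that deleting people never changes the degrees of the remaining people, whose neighbours are tasks. For any $\pi$ and $1\le i\le n-1$, one of the $i+1$ highest-degree people survives, and its star lies in one component. Hence $\tau(G^{\pi}_i)\ge d_{i+1}$ and $\mathcal{R}(G)\ge\sum_{i=1}^{n-1}d_{i+1}$. In the degree order, the $n-i$ survivors all have degree at most $d_{i+1}$. Every task in a component containing people is adjacent to one of them (isolated tasks count $0$, as in the paper's proof). So $\tau(G^{\pi_{\deg}}_i)\le(n-i)\,d_{i+1}$, and summing gives $R(G,\pi_{\deg})\le(n-1)\,\mathcal{R}(G)$, zero case included. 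The paper never argues this upper-bound direction. Its proof only exhibits the tree in which a central person is joined to one task of each of $k$ stars with $k+1$ tasks, where the degree order pays $(n-1)/2$ times the optimum. That shows the linear factor is attained, so together the two arguments give $\Theta(n)$. Finally, do not aim step~2 at the version truncated at the first backbone prefix. There the same tree has optimum $k+1$, since the central person alone is a backbone set. The degree order pays $k^2(k+1)/2$, a ratio of $(n-1)^2/2$, so no $O(n)$ bound can hold in that version.
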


\begin{proof}

    Similarly to what we did in Theorem~\ref{thm:aprox_min_critical_set}, we construct a graph for which removal by degree achieves an approximation ratio of $O(n)$.
    Let $k \in \mathbb{N}$, with $k \ge 2$.
    
    We build the graph as follows.
    We instantiate $k$ stars, each consisting of one person and $k+1$ tasks.
    Then, we add one additional person and connect this person to one task from each of the $k$ stars introduced previously.
    We now obtain a tree rooted at a central person, connected to $k$ tasks, with $k$ peripheral people, each connected to $k+1$ tasks.
    This tree consists of $n = k+1$ people and $m = k(k+1)$ tasks.
    
    The optimal removal sequence for this tree removes the central person first (degree $k$), followed by the peripheral people (degree $k+1$).
    Removing the central person reduces the network to $k$ stars, each with $k+1$ tasks.
    Each subsequent removal isolates $k+1$ tasks, until the last person is removed, at which point the graph consists of $m$ disconnected tasks.
    The total cost according to Eq.~\ref{eq:robustness_pi} for this sequence is $R(G, \pi^*)$ = $k(k+1) = k^2 + k$.

    Algorithm~\ref{alg:bus-factor}, with $\pi$ defined in decreasing order of degree, instead begins by removing the people with degree $k+1$, isolating $k$ tasks after each removal.
    After $k$ steps, the network is reduced to the central person connected to $k$ tasks.
    After removing the final person, the network becomes a collection of isolated tasks with no people, and therefore the number of tasks in the largest connected component is zero.
    The cost of this procedure is $R(G, \pi) = (m - k) + (m - 2k) + \dots + (m - k^2) = km - kS(k)$, where $S(k) = k(k+1)/2 = (k^2 + k)/2$. Therefore, $R(G, \pi) = k(k(k+1)) - k((k^2 + k)/2) = k^3 + k^2 - (k^3 + k^2)/2 = (k^3 + k^2)/2$.

    The approximation ratio is thus:
    \[
    \frac{R(G, \pi)}{R(G, \pi^*)} = \frac{k^3 + k^2}{2(k^2 + k)} = \frac{k^2(k+1)}{2k(k+1)} = \frac{k}{2} = \frac{n-1}{2} = O(n)
    \]
\end{proof}

We remark that Theorem~\ref{thm:approx_bus_factor} establishes an $O(n)$ worst-case approximation bound not only for the bus-factor, but for network robustness in general (including non-bipartite graphs).

While worst-case approximation bounds are informative, they say little about the \emph{typical} behavior of an approximation algorithm, or about its average performance on realistic graphs.
Therefore, in the next section, we empirically assess both the running time of our algorithms and the effectiveness of the degree-based removal strategy.

\section{Algorithmic Performance Evaluation}
\label{sec:performance}

We implement both our algorithms and bus-factor measures in Python and run all experiments on a laptop with a 2.9 GHz CPU and 32 GB of RAM.
We assess the running time of our algorithms on random bipartite networks with node sets ranging in size from $1\,000$ to $1\,000\,000$ nodes each, for a total number of edges between $400\,000$ and approximately $500\,000\,000$.
While random networks are not good models of real-world project networks~\cite{piccolo2018design}, they are well suited for measuring running time because they do not exhibit exploitable structure.
We set the threshold $t = 0.\overline{9}$ for both Maximum Redundant Set (hereafter MRS) and Minimum Critical Set (hereafter MCS) in order to force the algorithms to process the entire network.
We find that our algorithms scale linearly with the number of edges and remain efficient even on massive networks, processing half a billion edges in less than 20 seconds (Fig.~\ref{fig:performance}\textbf{A}).

Next, to assess the quality of the approximations beyond worst-case analysis, we compare our implementations against alternative processing and removal orders.
We use a random order as a baseline, since a reasonable approximation algorithm should outperform a random solution.
In addition to the random baseline, we compare against removal orders induced by betweenness centrality and eigenvector centrality, two strong baselines that are widely used in network robustness studies~\cite{artime2024robustness}.
We evaluate the relative size of the solutions for MRS and MCS: for MRS, larger solutions are better, while for MCS, smaller solutions are preferable.
For Robustness, instead, we evaluate the area under the curve $\mathcal{B}(G,\pi)$ obtained by defining $\pi$ according to each removal strategy: smaller areas correspond to better solutions.

For this comparison, we generate bipartite power-law networks, as they reproduce many properties of real-world project networks~\cite{piccolo2018design}.
Specifically, we sample the degree distributions of the two node sets from a power-law probability density function and generate graphs using the bipartite configuration model.
The power-law probability density function is defined as follows:
\begin{equation}
\label{eq:power_law}
f(x, \alpha, a, c) = \frac{\alpha}{c}\left( \frac{x - a}{c} \right)^{\alpha - 1},
\quad \text{with support } [a, a + c].
\end{equation}
\noindent
We sample networks with $7\,500$ people and $10\,000$ tasks.
People degrees are sampled from Eq.~\ref{eq:power_law} with parameters $\alpha = 0.2,\; a = 1,\; c = 100$, while task degrees are sampled with parameters $\alpha = 0.2,\; a = 1,\; c = 70$.

\begin{figure*}[t]
    \centering
    \includegraphics[width=0.99\linewidth]{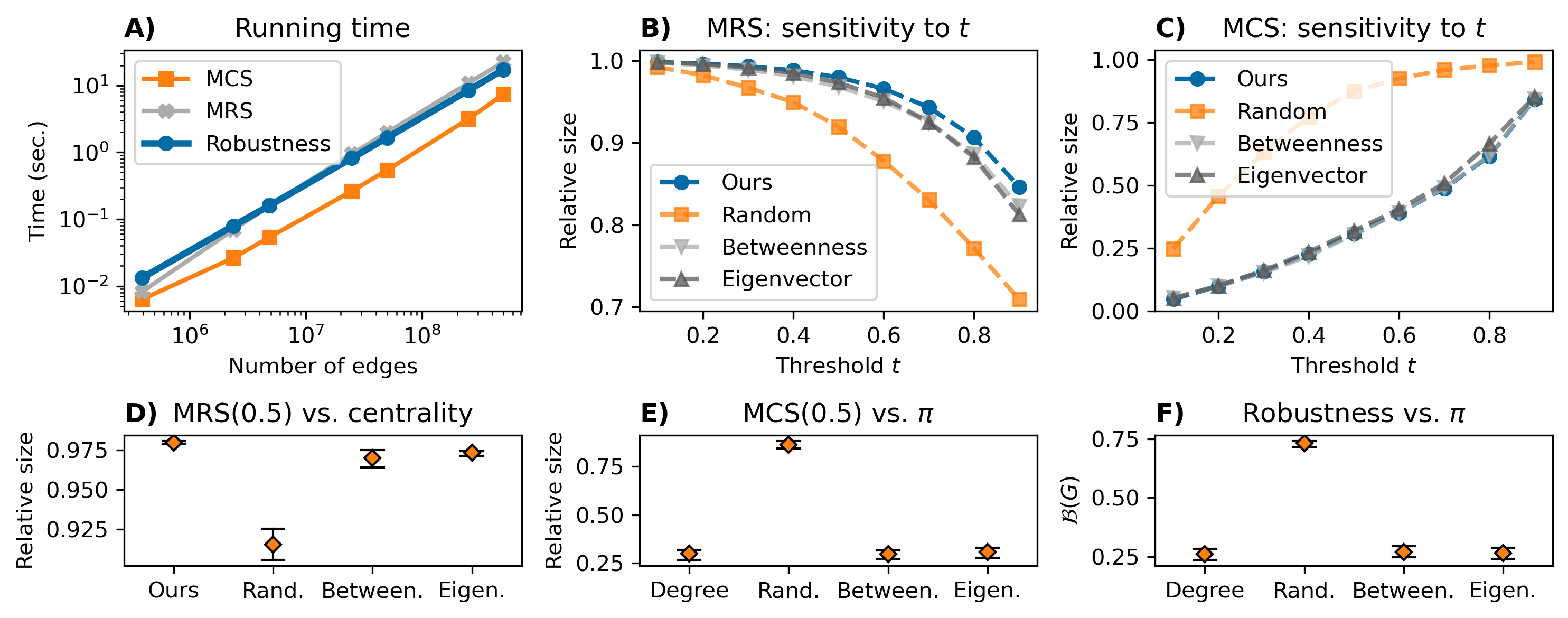}
    \caption{Algorithmic performance analysis.
\textbf{A)} Running time comparison.
\textbf{B)} Sensitivity of MRS to the threshold $t$.
\textbf{C)} Sensitivity of MCS to the threshold $t$.
\textbf{D)} Average relative size of the MRS for different processing strategies.
\textbf{E)} Average relative size of the MCS for different removal orders $(\pi)$.
\textbf{F)} Average robustness $\mathcal{B}(G,\pi)$ for different removal orders $(\pi)$.
\textbf{MRS}: Maximum Redundant Set;
\textbf{MCS}: Minimum Critical Set;
\textbf{Robustness}: Bus-factor as Bipartite Network Robustness.}
    \label{fig:performance}
\end{figure*}

This choice of parameters allows us to reproduce key characteristics of real-world project networks, namely: \emph{1)} tasks are typically more numerous than people; \emph{2)} degree distributions are right-skewed; and \emph{3)} only a small fraction of people (the integrators) is connected to a large number of tasks, while most people are specialists working on few tasks.
These characteristics have been observed across different types of projects, ranging from software development~\cite{yamashita2015pareto} to industrial design~\cite{piccolo2018design}.
We emphasize that there is nothing special about these specific parameters; we ran our experiments with different parameter values and obtained qualitatively similar results.

Since MRS and MCS depend on a threshold $t$, we first examined how their outputs vary as $t$ changes.
We observe that the size of MRS decreases monotonically with increasing $t$ (Fig.~\ref{fig:performance}\textbf{B}), while the size of MCS increases monotonically with $t$ (Fig.~\ref{fig:performance}\textbf{C}).
Across all values of $t$, our implementation consistently outperforms all baselines.
This monotonic behavior indicates that, qualitatively, the choice of $t$ does not substantially affect the trends observed.
However, from a quantitative standpoint, if MRS or MCS are to be interpreted as bus-factor measures, the choice of $t$ is crucial and requires careful justification.
In prior work, $t$ was typically set to $0.5$~\cite{ferreira2017comparison}.
Accordingly, for continuity with the existing literature, we set $t = 0.5$ in our subsequent analyses of MRS and MCS.

Finally, we evaluate the performance of our implementations against the baselines on $1\,000$ bipartite power-law networks.
For MRS (Fig.~\ref{fig:performance}\textbf{D}), our implementation produces solutions that are approximately $6\%$ larger than those obtained by the random baseline.
For MCS (Fig.~\ref{fig:performance}\textbf{E}) and Robustness (Fig.~\ref{fig:performance}\textbf{F}), our implementations yield solutions that are approximately three times smaller than those produced by the random baseline.
Moreover, our implementations perform at least as well as betweenness and eigenvector centrality–based strategies, while being computationally faster.

We now turn our attention to evaluating the suitability of MRS, MCS, and Robustness as bus-factor measures.

\section{Experimental Evaluation of Bus-Factor Measures}
\label{sec:experiments}

How can bus-factor measures be evaluated while maintaining a domain-agnostic perspective?
Empirical evaluations such as those conducted by Avelino \textit{et al.}~\cite{avelino2016novel} and Ferreira \textit{et al.}~\cite{ferreira2017comparison} are problematic in this respect: they are tailored to GitHub repositories and consider only a relatively small number of projects, most of which exhibit a trivial bus-factor of one.

Here, we take a different approach.
We propose a sensitivity analysis on synthetic power-law bipartite graphs (sampled from Eq.~\ref{eq:power_law}, as before), in which we apply controlled perturbations to evaluate the response of bus-factor measures in terms of relative change.
The perturbations we apply correspond to specific managerial actions that redistribute workload among contributors.

Concretely, we investigate the following research questions:

\begin{description}
    \item[Q1] How does the bus-factor change as the density of the network varies?

    To answer this question, we add or remove edges from the network, modeling situations in which people are assigned more or fewer tasks than in the original network.
    
    \item[Q2] How does the bus-factor change as people are added to the network?

    We add \emph{singletons} (i.e., people who work on only one task) and \emph{duplicates} (i.e., clones of existing contributors), modeling situations in which new people are added to a project in an attempt to increase redundancy and, consequently, robustness.

    \item[Q3] How does the bus-factor change as a consequence of reassigning people to tasks?

    Here, we exploit the known relationship between degree correlation and network robustness~\cite{newman2003mixing}.
    Specifically, we increase or decrease degree correlation in the network, without adding or removing nodes or edges, to induce corresponding changes in robustness.
\end{description}
\noindent
We address each of these questions in the following sections.

\subsection{Sensitivity of Bus-Factor Measures to Different Workload Distributions (Q1)}

Assigning more tasks to people is a common strategy to increase personnel knowledge within an organization, thereby making a project less vulnerable to contributor unavailability~\cite{piccolo2018design}.
Within our graph-theoretical framework, this corresponds to adding edges to the initial bipartite network, making the graph more connected and therefore more robust.
Conversely, removing edges models situations in which people are assigned, on average, fewer tasks, resulting in a less robust project structure.

We expect a well-designed bus-factor measure to be positively correlated with network density: the denser the network, the more robust the project; the sparser the network, the less robust it becomes.
We perform this set of simulations by adding or removing edges from the initial network in batches of $100$, for a total of $50\,000$ edges added or removed.

We find that MRS is largely insensitive to variations in network density, whereas MCS and Robustness qualitatively exhibit the expected behavior (Fig.~\ref{fig:densification_sparsification}).
However, this experiment already highlights the implications of the dependence of MCS on a fixed threshold.
As the network becomes denser, MCS saturates earlier than Robustness (Fig.~\ref{fig:densification_sparsification}\textbf{A}): during the first $25\,000$ edges added, MCS increases at a higher rate than Robustness and then plateaus (see the inset of Fig.~\ref{fig:densification_sparsification}\textbf{A}).
This behavior is a direct consequence of the threshold parameter $t$.

\begin{figure}[t]
    \centering
    \includegraphics[width=0.99\linewidth]{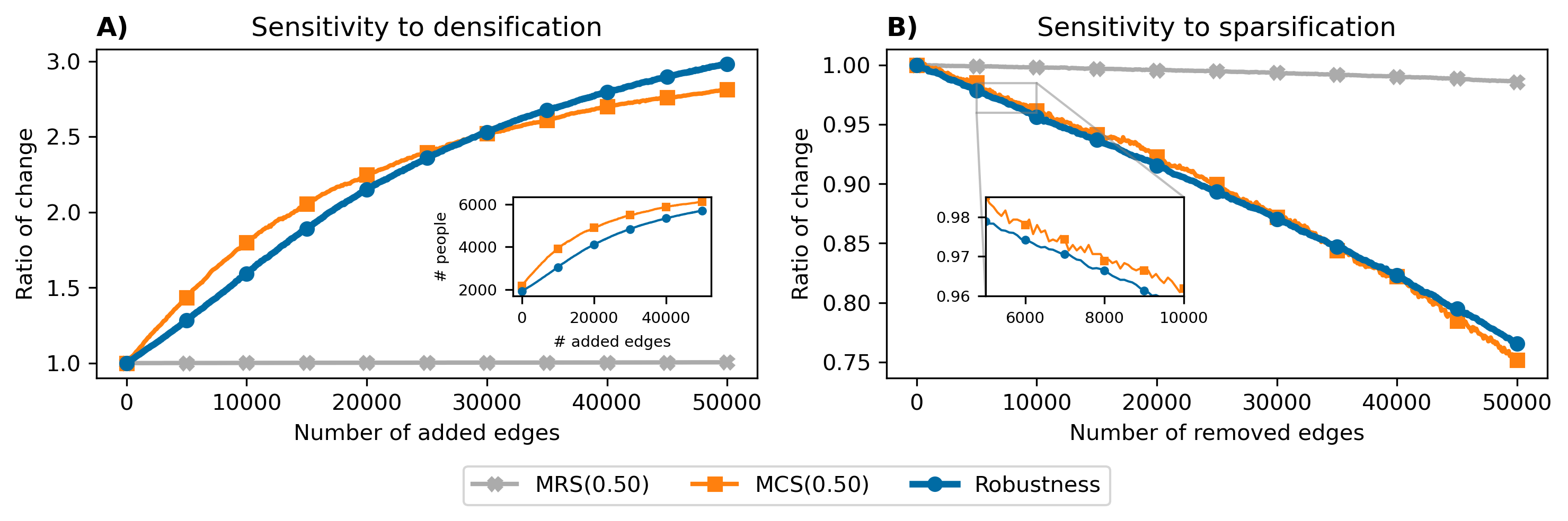}
    \caption{Sensitivity of bus-factor measures to changes in network density (\textbf{Q1}).
    \textbf{A)} Densification: MCS and Robustness increase with network density, as expected. Inset: values of MCS and Robustness expressed as number of people.
    \textbf{B)} Sparsification: MCS and Robustness decrease with network density, as expected. Inset: a zoomed-in view of the range $[5\,000, 10\,000]$ edges removed, showing that Robustness is more stable than MCS.
    \textbf{MRS}: Maximum Redundant Set;
    \textbf{MCS}: Minimum Critical Set;
    \textbf{Robustness}: Bus-Factor as Bipartite Network Robustness.}
    \label{fig:densification_sparsification}
\end{figure}

Similarly, while MCS and Robustness decrease at comparable rates as the network becomes sparser (Fig.~\ref{fig:densification_sparsification}\textbf{B}), MCS exhibits noticeable oscillations (see the inset of Fig.~\ref{fig:densification_sparsification}\textbf{B}).
These oscillations arise both from the use of a fixed threshold and from the fact that network sparsification alters node degrees, and consequently the removal order.
In contrast, Robustness exhibits a more stable behavior because it is based on connected components and the maximum number of connected tasks, rather than on task coverage alone.

Overall, this experiment already demonstrates the advantages of estimating a project’s bus-factor using Robustness over coverage-based alternatives.

\subsection{Sensitivity of Bus-Factor Measures to Strategies Aimed at Increasing Personnel Redundancy (Q2)}

We now turn to another strategy commonly used to strengthen projects and increase their robustness: hiring.
Adding people to a project can be an effective way to increase personnel redundancy and project robustness because, in addition to covering tasks, new people can improve the overall connectivity of the project by acting as integrators or anchors, thereby densifying sparser areas of the underlying network.

However, hiring is not a silver bullet, and if poorly implemented, this strategy can backfire.
First, the hiring process itself is complex, budget-constrained, and often slow, inefficient, or ineffective~\cite{behroozi2020debugging}.
Second, adding people to a project can impose higher coordination costs among personnel~\cite{brooks1995mythical}.
Finally, there is a substantial difference in terms of project robustness between adding integrators and adding specialists (i.e., people who work on only a few tasks): as already mentioned, integrators are the individuals who primarily increase the global robustness of a project~\cite{piccolo2018design}.

Here, we study the response of bus-factor measures to two strategies aimed at increasing redundancy:
\begin{enumerate}
    \item \emph{Adding singletons}: we add to the initial graph one singleton per task. A singleton is a specialist who works on only a single task.
    While adding singletons increases redundancy at the task level, it does not improve redundancy at the project level, as it fails to integrate distant modules of the network.
    \item \emph{Adding duplicates}: we add to the initial graph copies of existing people (duplicated nodes), starting from the person with the highest degree down to the person with the lowest degree.
    High-degree people are integrators, while low-degree people are specialists. This strategy increases redundancy up to a certain point -- namely, until integrators are duplicated -- after which it exhibits diminishing returns.
\end{enumerate}
\noindent
In line with empirical findings in project management~\cite{lawrence1986organization} and software engineering~\cite{majumder2019software}, a good bus-factor measure should be relatively insensitive to the addition of singletons and should correctly capture the diminishing returns associated with adding duplicates.

\begin{figure}[t]
    \centering
    \includegraphics[width=0.99\linewidth]{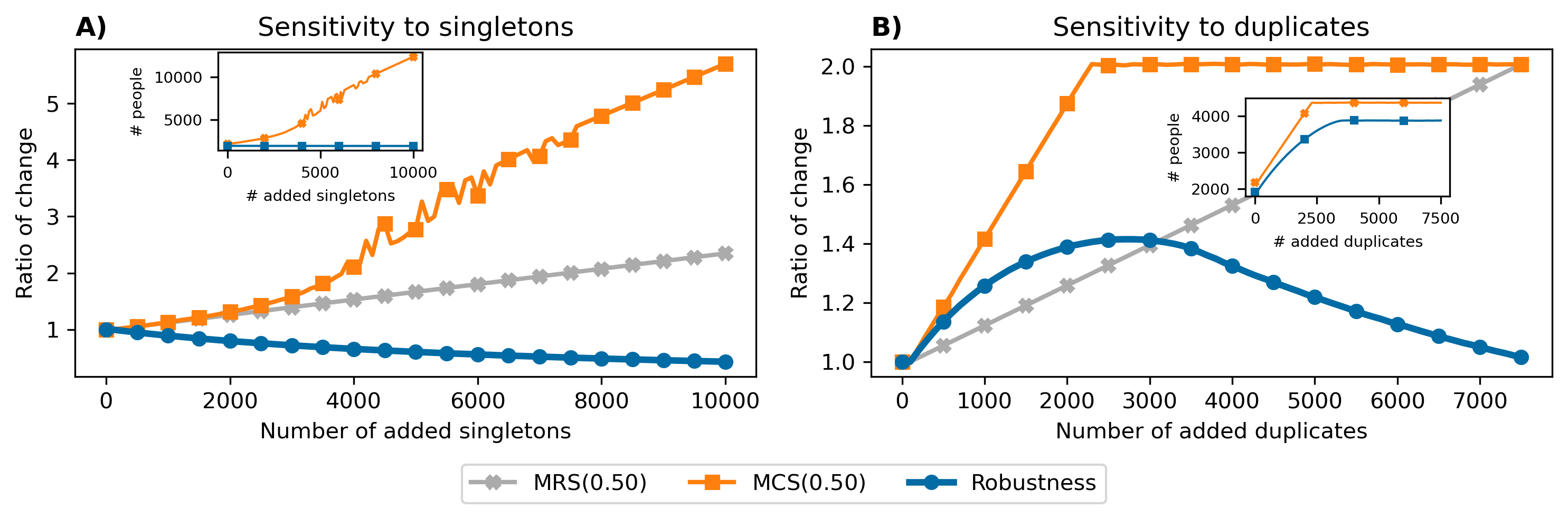}
    \caption{Sensitivity of bus-factor measures to strategies aimed at increasing personnel redundancy (\textbf{Q2}).
    \textbf{A)} Sensitivity to singletons (i.e., people who work on only one task). MCS and MRS increase as singletons are added to the network and do not exhibit an upper bound. This behavior is undesirable for a bus-factor measure. Robustness, instead, correctly captures the diminishing returns of adding singletons and decreases as singletons are added to the network. In the inset: values of MCS and Robustness expressed as numbers of people. While normalized Robustness decreases, reflecting diminishing returns, the number of people required to fragment the network remains constant, as expected.
    \textbf{B)} Sensitivity to duplicates. MRS grows indefinitely without an upper bound. MCS grows as duplicates are added until it saturates due to its threshold. Robustness, in line with expectations, captures the diminishing returns of adding people with progressively lower degree, exhibiting behavior consistent with empirical findings on the importance of integrators.
    \textbf{MRS}: Maximum Redundant Set;
    \textbf{MCS}: Minimum Critical Set;
    \textbf{Robustness}: Bus-Factor as Bipartite Network Robustness.}
    \label{fig:singletons_duplicates}
\end{figure}

We simulate these two strategies by adding singletons or duplicates in batches of $100$ people and report the results in Fig.~\ref{fig:singletons_duplicates}.
We find that both MRS and MCS are overly sensitive to the addition of singletons (Fig.~\ref{fig:singletons_duplicates}\textbf{A}).
Both measures grow without bound, implying that the bus-factor of a project could be increased indefinitely simply by adding singletons.
This result sharply contradicts empirical evidence on organizational differentiation \cite{lawrence1986organization}, coordination neglect in organizations \cite{heath2000coordination}, and iterative work dynamics \cite{piccolo2019iterations}, highlighting a major limitation of bus-factor measures based on MRS and MCS.

Robustness, instead, decreases as singletons are added to the project (Fig.~\ref{fig:singletons_duplicates}\textbf{A}).
This decrease is a direct consequence of both its reliance on the maximum number of connected tasks and its normalization factor (Eq.~\ref{eq:normalised_bus_factor}).
Since Robustness captures the degree of project fragmentation as people leave, singletons cannot increase its value.
Moreover, because Robustness is normalized, added singletons are effectively treated as wasted resources when compared to the contribution that an integrator could provide.
Consistently, the number of people required to disrupt the project remains stable throughout the simulation (inset of Fig.~\ref{fig:singletons_duplicates}\textbf{A}), further highlighting the flexibility and dual interpretability of Robustness.

MRS is also relatively insensitive to duplicates, while MCS saturates and fails to capture the diminishing returns associated with adding duplicates (Fig.~\ref{fig:singletons_duplicates}\textbf{B}).
Once again, Robustness successfully captures the diminishing returns provided by people with progressively lower degree, increasing more gradually than MCS and eventually decreasing when additional people no longer contribute meaningful redundancy -- a nuance that MCS is unable to represent.

Overall, this experiment demonstrates that MRS and MCS are ill-suited as bus-factor measures, as they overestimate the impact of singletons and underestimate the importance of integrators.
In contrast, Robustness accurately captures the effects of both singletons and integrators, exhibiting behavior consistent with theoretical expectations and empirical findings.

\subsection{Sensitivity of Bus-Factor Measures to Different Allocations of People to Tasks (Q3)}

Reassigning people to tasks, without hiring new personnel or changing the underlying workload distribution, is another effective strategy for increasing project robustness.
Piccolo \textit{et al.}~\cite{piccolo2024evaluating}, for instance, designed two heuristics that increased the bus-factor of a biomass power-plant project by $40\%$.

Here, we leverage a well-established result from network science that links network robustness to degree correlation (also known as degree assortativity)~\cite{newman2003mixing}.
Degree correlation measures the extent to which nodes with high (low) degree tend to connect to other nodes with high (low) degree.
It is defined as the Pearson correlation coefficient of the degrees at either end of an edge:
\begin{equation}
    r = \frac{\langle k_i k_j \rangle - \langle \tfrac{1}{2}(k_i + k_j) \rangle^2}
    {\langle \tfrac{1}{2}(k_i^2 + k_j^2) \rangle - \langle \tfrac{1}{2}(k_i + k_j) \rangle^2}
\end{equation}

\noindent
where $k_i$ and $k_j$ are the degrees of nodes $i$ and $j$ if they are connected by an edge, and $\langle \cdot \rangle$ denotes an average over all edges.
In general, higher degree correlation is associated with greater network robustness~\cite{newman2003mixing}, because high-degree nodes tend to connect with each other, forming dense and resilient sub-networks.

In this set of experiments, we assess the response of the bus-factor measures on synthetic networks with varying levels of degree correlation, while keeping the number of people, the number of tasks, and their degree distributions fixed.

Achieving this requires a non-trivial sampling procedure.
We sample the statistical ensemble $\{G\}$ according to the probability measure $\mu(G) \propto e^{-H(G)}$, induced by the Hamiltonian $H(G) = -J \sum_{(i,j) \in E} k_i k_j$, where $J$ is a parameter that controls the degree correlation of the sampled networks.

The sampling procedure consists of applying a Metropolis dynamics to the original graph $G$ by repeatedly rewiring two randomly selected edges $(u,z)$ and $(w,v)$, producing a new pair of edges (not already present in the network) $(u,v)$ and $(w,z)$.
For the resulting network $G'$, we compute the energy difference $\Delta H = H(G') - H(G)$ and accept $G'$ with probability $e^{-\Delta H}$.
We explore $17$ equally spaced values of $J$ in the interval $[-0.002, 0.002]$, generating $100$ independent networks for each value, for a total of $1\,700$ networks.

As shown in the inset of Fig.~\ref{fig:degree_correlation}\textbf{A}, the average degree correlation of the sampled networks increases monotonically with $J$.
From Fig.~\ref{fig:degree_correlation}\textbf{A}, we observe that MRS is again largely insensitive to changes in the allocation of people to tasks and even exhibits the opposite of the expected behavior, slightly decreasing as degree correlation increases.
Both MCS and Robustness, instead, behave consistently with theoretical expectations and prior findings from network science~\cite{newman2003mixing}.
\begin{figure}[t]
    \centering
    \includegraphics[width=0.99\linewidth]{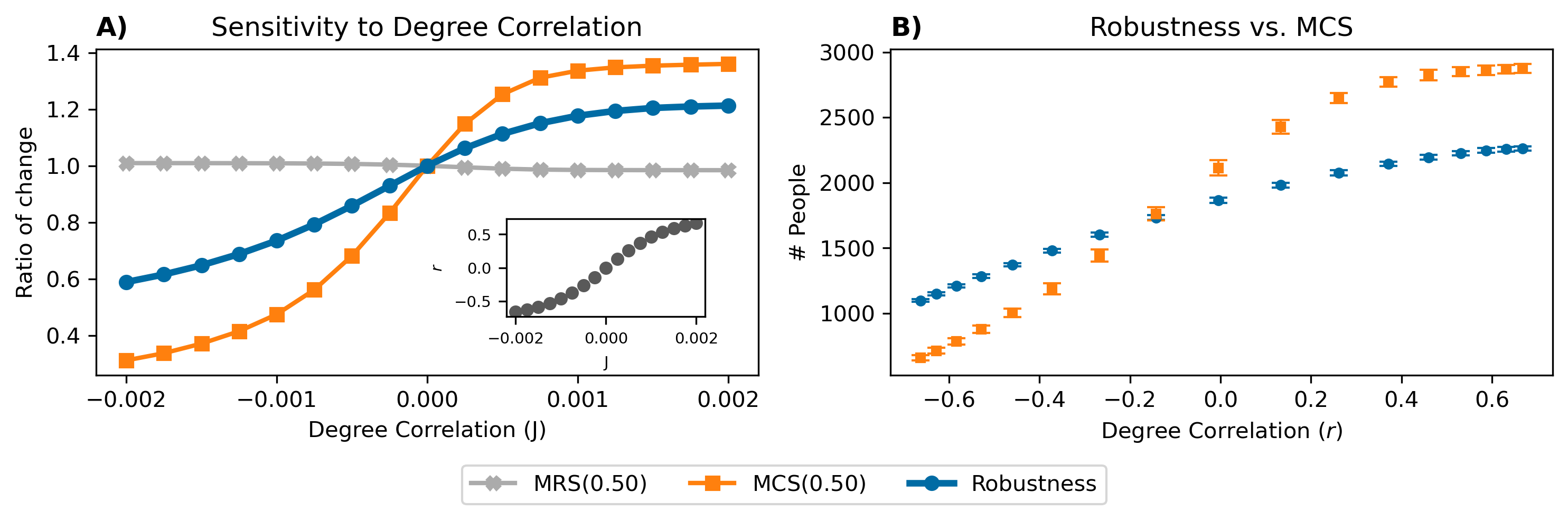}
    \caption{Sensitivity of bus-factor measures to changes in degree correlation (\textbf{Q3}).
    Each point represents an average over $100$ independent network realizations, for a total of $1\,700$ networks.
    \textbf{A)} MRS is insensitive to degree correlation, while MCS and Robustness exhibit the expected increasing behavior.
    \textbf{B)} Average bus-factor estimates expressed as numbers of people, with confidence intervals: Robustness displays a closer linear relationship with $r$ and lower variance than MCS.
    \textbf{MRS}: Maximum Redundant Set;
    \textbf{MCS}: Minimum Critical Set;
    \textbf{Robustness}: Bus-Factor as Bipartite Network Robustness.}
    \label{fig:degree_correlation}
\end{figure}

We note, however, that MCS introduces greater distortion with respect to degree correlation $r$ than Robustness (Fig.~\ref{fig:degree_correlation}\textbf{B}).
The curve produced by Robustness is closer to linear and exhibits lower overall variance than that produced by MCS.

\section{Discussion}
\label{sec:discussion}

At this point, we have a rather complete understanding of the theory that underpins the bus-factor and its approximation, as well as a detailed comparison of MRS, MCS, and Robustness as bus-factor measures.
What does our theory imply for network science and project management?
How do our results translate into actionable recommendations?
What are promising paths for future research?
In the following, we are going to answer these questions.

\subsection{Implications for Network Science and Project Management}
In this paper, we have developed a domain-agnostic graph theoretical framework that allows us to cast the problem of measuring the bus-factor as a combinatorial problem on bipartite graphs.
We defined three combinatorial problems that encode different definitions of bus-factor: maximum redundant set (MRS), minimum critical set (MCS), and bus-factor as network robustness (Robustness).
By showing that all three formulations are NP-hard, we have shown that computing the bus-factor of a project is not just a risk measure relevant to project management, but also a problem with important implications to the theory of computation: if we are able to solve any of these three problems in polynomial time, then $\text{P} = \text{NP}$.
Furthermore, our proof of NP-hardness of Robustness (Theorem~\ref{thm:network_robustness}) implies the NP-hardness of computing the network robustness measure $\mathcal{R}$ introduced by Schneider \textit{et al.}~\cite{schneider2011mitigation}, which was still an open problem~\cite{zhao2021towards}.

Our framework links network science, computer science, and project management. 
As this paper shows, project management problems can benefit directly from both the algorithmic and theoretical tools developed in network science and theoretical computer science.
For instance, percolation theory~\cite{albert2002statistical} provides theoretical justification for removing people in decreasing order of degree, while tools from approximation algorithms allow us to reason formally about the quality and limits of heuristic strategies. 
In this regard, our proofs of worst-case approximation bounds (Theorems~\ref{thm:aprox_max_redundant_set}, \ref{thm:aprox_min_critical_set}, \ref{thm:approx_bus_factor}) provide, to the best of our knowledge, the first approximation results for MRS, MCS, Robustness, and degree-based removal strategies.
To complete the picture, project management offers theories and empirical findings that we have used to inform our experimental comparison of bus-factor measures.

Our work provides extensive experimental and theoretical support for the importance of integrators as enablers of project robustness~\cite{piccolo2018design}, strengthening prior findings on software projects~\cite{majumder2019software}, and iterative development dynamics \cite{piccolo2019iterations}.
These results uncover close connections with both the concept of anchor nodes and the theory of robustness in bipartite graphs and hypergraphs~\cite{shang2025percolation}.
Moreover, we have shown a relation between project robustness and degree correlation~\cite{newman2003mixing}, which implies that it is possible to increase project robustness by reassigning people to tasks.
Finally, our extensive tests have shown that Robustness is the only measure that consistently exhibits the expected behavior, while coverage-based measures (MRS and MCS) have fatal flaws, notably their inability to capture project fragmentation, which render them unfit as bus-factor measures.

In short, the theory and practice of computing the bus-factor can be naturally framed as the theory and practice of computing bipartite network robustness (and vice versa).

\subsection{Translating our Findings to Practice}
\label{sub:use_bf}

Bus-factor measures have already been implemented in practice and shown to be informative for understanding project robustness and guiding corrective actions~\cite{jabrayildaze2022bus}.
Therefore, we would like to highlight here the key take-home messages of our analyses, which we believe provide actionable insights on estimating and improving the bus-factor.
\begin{enumerate}
    \item Robustness is currently the only bus-factor measure that is congruent with theoretical expectations and empirical findings. In addition, it is normalized, does not rely on a threshold, and it is able to capture project fragmentation, qualities that make it easy to use in practice.
    \item In the case where hiring is done with the purpose of increasing project robustness, the best strategy is to hire integrators that increase redundancy and connect project modules that are far apart.
    \item A project bus-factor can be improved by reallocating people to tasks with no need to increase the individual workload or to hire new people. This can be achieved through dedicated algorithms~\cite{piccolo2024evaluating}, or by increasing the degree correlation of the project's bipartite network, as we have shown.
    \item MCS is not appropriate as a bus-factor measure. However, it can be useful to estimate how unequal the workload distribution is. For instance, one can estimate the smaller set of people who has produced at least $80\%$ of a code base, testing whether the project follows the 80--20 Pareto distribution.
    \item MRS is uninformative as a bus-factor measure. However its complement, the partial set cover, can be useful to identify a small set of subject matter experts and integrators who can drive knowledge sharing practices and mentor new hires, increasing project sustainability. This links the bus-factor to the problem of finding the most influential nodes in a network.
\end{enumerate}
We remark that the use in practice of bus-factor measures as well as the above strategies should always be complemented by case-specific consideration such as people's skills and relative task importance.

\subsection{Avenues for Future Research}
\label{sub:future_research}

Our research paves the way to future extensions of the bus-factor, its theoretical underpinnings, and its approximation.
We sketch the way forward in the following:
\begin{enumerate}
    \item We have shown the NP-hardness of MRS, MCS, and Robustness and have proved that MRS is NP-hard to approximate within a factor $O(n^{1-\epsilon})$. The approximability of MCS and Robustness is still an open problem, the answer of which would give us important insights about the quality of approximation we can expect from \textit{any} heuristics.
    \item Our measure of bus-factor is only structural and does not account for relative task importance or people skills. Complementing the structural insights provided by Robustness with other social and technical factors~\cite{piccolo2019iterations} will likely offer a more complete understanding of project robustness.
    \item The theory of bus-factor can be extended by both developing its percolation theory and by uncovering what network measures correlate with it, beyond degree correlation.
    \item We have shown that the degree-based removal works well in practice, but algorithms that provide better estimations are very likely to exist. This is a promising research direction with potential to provide both better algorithms and higher understanding of the bus-factor.
\end{enumerate}
We hope that our work inspires more research at the intersection between network science, computer science, and project management.

\section{Conclusions}
\label{sec:conclusions}
The bus-factor is a measure of project robustness against personnel loss, turnover, and knowledge fragmentation. At its core, computing the bus-factor means answering a deceptively simple question: how many people must leave a project for it to stall?

The apparent simplicity of this question conceals substantial conceptual and computational difficulties. Defining when a project stalls, modeling how knowledge and responsibilities fragment as people leave, and formulating a measure that is independent of specific domains or platforms are all nontrivial challenges. As a result, existing approaches to bus-factor computation are largely ad hoc: they are project-specific or tailored to GitHub repositories, rely on simple coverage-based models, and depend on arbitrary thresholds to define failure.

In this paper, we introduced a domain-agnostic theoretical framework that casts bus-factor computation as a family of combinatorial problems on bipartite graphs. This framework allows existing approaches to be formalized and unified under two complementary formulations -- maximum redundant set (MRS) and minimum critical set (MCS) -- and makes their limitations explicit. Our analysis shows that, despite their intuitive appeal, coverage-based formulations fail to capture essential aspects of project robustness and are therefore inadequate as bus-factor measures.

Building on this framework, we proposed a novel bus-factor measure grounded in network robustness. Unlike prior definitions, this measure is normalized, threshold-less, and captures project degradation as a continuous process driven by fragmentation rather than a binary notion of failure. Both theoretical analysis and extensive empirical evaluation demonstrate that this robustness-based formulation aligns with expected behavior and consistently outperforms existing measures.

From a computational perspective, we showed that computing the bus-factor is NP-hard under all formulations considered in this work. This establishes fundamental limits on exact computation and motivates the use of approximation algorithms. Within this setting, we developed approximation strategies and proved worst-case guarantees for degree-based removal, providing the first formal approximation results for bus-factor computation.

Overall, this work elevates the bus-factor from an informal heuristic to a principled, rigorously grounded measure of project robustness. By reframing the bus-factor as a problem of bipartite network robustness, we provide a unifying theoretical foundation that connects project management, network science, and computational complexity, and enables systematic analysis, comparison, and approximation of robustness in collaborative systems.

\section*{Acknowledgments}
\noindent Sebastiano A. Piccolo is grateful to Marco Manna, Simona Perri and Aldo Ricioppo for helpful comments and fruitful conversations over NP-hardness and approximability of NP-hard optimization problems.
This research is partially supported by MUR under PNRR project PE0000013-FAIR, Spoke 9 - Green-aware AI -- WP9.2 and PN RIC project ASVIN ``Assistente Virtuale Intelligente di Negozio'' (CUP B29J24000200005).



\bibliographystyle{elsarticle-num} 
\bibliography{biblio}



\end{document}